\documentclass[letterpaper,8pt]{extarticle}  
\usepackage{import}
\usepackage{local}
\usepackage[left=.65in,top=.65in,bottom=.65in,right=.65in]{geometry}
\usepackage{lipsum} 
\usepackage{algorithm2e}
\RestyleAlgo{ruled}
\SetKwFor{For}{for}{do}{endfor} 
\usepackage{booktabs}
\usepackage{xcolor}
\usepackage{soul}
\usepackage{ulem}
\usepackage{amsthm}
\usepackage{hyperref}
\hypersetup{colorlinks,citecolor=Accent, linkcolor=Accent, urlcolor=Accent}

\usepackage{xspace}

\newcommand{\mycomment}[1]{}

\newcommand{\sysname}{SPIDER\xspace}
\newcommand{\basename}{baseSPIDER\xspace}

\newtheorem{theorem}{Theorem}
\newtheorem{lemma}[theorem]{Lemma}

\title{SPIDER: Two Server Functionality for the Cost of Zero}

\author{%
\textbf{%
Ofir Dvir\orcidlink{0009-0001-7418-6392}\textcolor{Accent}{\textsuperscript{1}}\,\href{mailto:odvir@ucsb.edu}{\small odvir@ucsb.edu},\space
Kali Hale\orcidlink{0009-0001-2925-0607}\textcolor{Accent}{\textsuperscript{1}}\,\href{mailto:kalihale@ucsb.edu}{\small kalihale@ucsb.edu},\space
Javin Zipkin\orcidlink{0009-0004-7010-657X}\textcolor{Accent}{\textsuperscript{1}}\,\href{mailto:javinzipkin@ucsb.edu}{\small javinzipkin@ucsb.edu},\space
Divyakant Agrawal\orcidlink{0000-0002-0215-9539}\textcolor{Accent}{\textsuperscript{1}}\,\href{mailto:divyagrawal@ucsb.edu}{\small divyagrawal@ucsb.edu},\space
Dahlia Malkhi\orcidlink{0000-0002-7038-7250}\textcolor{Accent}{\textsuperscript{1}}\,\href{mailto:dahliamalkhi@ucsb.edu}{\small dahliamalkhi@ucsb.edu}}\\
\begin{small}
\textcolor{Accent}{\textsuperscript{1}}University of California, Santa Barbara \\
\end{small}
}

\date{}
\begin{document}
\maketitle
\thispagestyle{empty}


\begin{doublespacing}
\section*{Abstract}
\noindent
\textbf{\textcolor{Accent}{We introduce baseSPIDER and SPIDER, private information retrieval (PIR) schemes that embody two technical advancements.

The baseSPIDER protocol operates with a single server and a stateful client that performs pre-processing and stores hints for future queries. In this setting, baseSPIDER introduces a new approach that matches the asymptotically optimal communication complexity of state-of-the-art schemes while improving constant factors--an advantage that is particularly significant for databases with large entries. In addition, baseSPIDER offers a conceptually simpler design relative to prior protocols.

SPIDER operates over a \textit{default} database interface and requires no cooperation from the server at any stage. To our knowledge, SPIDER is the first single-server PIR construction of this design, achieving privacy without specialized APIs, auxiliary server state, or protocol-specific interaction beyond conventional indexed access.

SPIDER is built via a simple transformation of baseSPIDER to the default server setting, eliminating deployment barriers and enabling immediate applicability to existing systems. This transformation can be applied more broadly to three recent PIR solutions, adapting them for use in the default-server paradigm and yielding solutions of independent interest. SPIDER compares to the resulting modified solutions by exhibiting a simpler design while incurring higher client computational work.
}}

\section{Introduction}

On the open web, data providers often keep detailed profiles of individual users. These profiles may include websites visited, location history, app data, address books, nearby wireless devices, cellular networks, WiFi networks, and more. Coupled with the \textit{specific} content a user has retrieved from a website, this clearly violates \textit{privacy}, because profile information may be easily correlated with specific users, may be used to glean insights and analytics into businesses, or may simply used for targeted advertising.
In other words, the fact that the information has been gathered at all is a privacy risk, regardless of the intended use.

A classical technique known as Private Information Retrieval (PIR) enables a client to retrieve an element from a remote database without revealing which element is accessed. There exists a rich line of prior work on PIR in various settings. 
Recent advances bring PIR into the practical realm by operating with a single server and a stateful client that performs pre-processing and stores hints for future queries, avoiding the use of any heavy cryptographic computations on the server-side.
Building on and complementing these advances, this paper introduces \basename and \sysname\footnote{SPIDER stands for Simple Private Information DEfault-server Retrieval.}, a pair of PIR schemes that address two distinct remaining objectives for deploying PIR in the web browsing setting.

\medskip

The first goal is scaling to the large volumes of content being accessed and delivered from web servers. 
Modern web servers routinely serve large objects, ranging from kilobyte-scale records such as patent entries, to megabyte-scale images, and even gigabyte-scale video content. At the same time, these systems must sustain high levels of concurrency, servicing many client requests simultaneously. In such environments, even modest per-query overheads can quickly compound. Consequently, download bandwidth in PIR is not merely a theoretical metric but a critical systems parameter, and achieving practical performance requires improving even constant factors, particularly in bandwidth-intensive workloads.

Addressing this objective, \basename embodies a new and practical approach in the single-server PIR settings, that matches the asymptotically optimal communication complexity of~\cite{RenLing2024SaPA} while improving constant factors. This improvement becomes particularly consequential for databases with large entries. In addition, \basename offers a conceptually simpler design than prior protocols. 

Our second objective is to eliminate the need for a specialized server-side API for privacy-preserving functionality. Like most recent single-server PIR constructions, \basename requires the server to perform simple operations---e.g., XORing retrieved items---before returning a response. However, even such minimal assumptions may fail in the increasingly relevant setting where clients access \emph{default} web services. In this environment, servers are uncooperative, cannot (and lack incentives to) be modified, and expose only standard read-only, index-based access to their content. Indeed, providing fine-grained privacy-preserving functionality may directly conflict with such services’ business models, monitoring objectives, or abuse-prevention policies. Consequently, practical PIR systems targeting the open web must assume a \emph{default} server: one that performs no additional computation, maintains no PIR-specific metadata, and behaves indistinguishably from a standard public content server.
    
Addressing this objective, \sysname adapts \basename into a PIR scheme that assumes only read-only access to the server’s content and requires \emph{no cooperation} beyond standard indexed retrieval. To our knowledge, \sysname is the first solution to operate in the default-server setting. Moreover, we observe that this transformation extends more broadly to a class of single-server PIR constructions. In particular, when instantiated with prior cooperative single-server schemes~\cite{IEEESP:10646686_PIANO_2024_5,RenLing2024SaPA}, it yields new protocols that preserve (or nearly preserve) their efficiency.

This transformation requires the client to download multiple entries per query, and thus \sysname does not retain the optimal per-query communication achieved by \basename. As a result, our two objectives are not simultaneously satisfied, and combining them remains an open problem. On the other hand, retrieving multiple entries in full yields an additional benefit: all schemes adapted to this setting (as well as those that natively support it) enable uninterrupted, continuous querying\footnote{PIANO-23 refers to this property as ``unbounded''.}. After a one-time setup, query refresh can be incorporated into the standard query flow, without requiring additional interaction or communication beyond what is already inherent to the protocol.

\paragraph{Background}

In order to introduce our solutions, we first give a brief background on previous PIR advances that it builds upon.
We assume a single honest-but-curious server that serves content correctly but attempts to infer client interest based on observed requests. The server may record all access patterns, correlate them with external context, throttle suspicious traffic, or apply fingerprinting techniques. The client seeks \emph{access-pattern privacy}, meaning that the server should not learn which database index the client is interested in. Unlike settings where anonymizing networks (e.g., Tor \cite{torproject2026}) obfuscate client identity, our goal is orthogonal: even if the server knows \emph{exactly which client} is querying, it should learn nothing about \emph{which entry} the client retrieves.

A single-server must touch all entries, otherwise it learns the client is \textbf{not} interested in some entries~\cite{10.1007/3-540-44598-6_4}.
However, practical considerations rule out the server using any cryptographic protocols, such as homomorphic encryption, to query the server while hiding the content of the query \textit{from} the server, since they incur a prohibitive server computation cost. Hence we are in the Information Theoretic PIR (IT-PIR) setting. 

Patel, Persiano, and Yeo ~\cite{10.1145/3243734.3243821} pioneered a single-server IT-PIR approach that consists of two parts, preprocessing and retrieval. 
The preprocessing stage streams the entire database and stores hints on the client, hence it is called \textit{stateful}. 
Assuming the server holds $n$ entries of size $\beta$, the preprocessing stage results in $O(n\cdot \beta)$ communication.
The goal of preprocessing is to enable recurring retrievals from the server privately without streaming the entire database each time.
A follow up stateful PIR work by Corrigan-Gibbs and Kogan ~\cite{cryptoeprint:2019/1075_CGK} demonstrated the first stateful PIR scheme with amortized sublinear per-retrieval complexity of $O(n^{3/4}\cdot \beta)$.

A series of recent advancements tighten these results and bring practical, stateful single-server PIR solutions. Of these, the most relevant ones here are two seminal schemes due to Zhou et al.~\cite{cryptoeprint:2023/452_PIANO_11_12, IEEESP:10646686_PIANO_2024_5}, referred to as ``PIANO-23'' and ``PIANO-24'', resp.\, 
and Ren et al.\cite{RenLing2024SaPA}, referred to as ``RMS-24'' (we are specifically concerned with the single-server variant presented in this paper). 

The PIANO methods introduce a hint construction strategy that brings client-side storage and simultaneously per-query communication down $\Tilde{O}(\sqrt{n}\cdot\beta)$, the optimal (neglecting logarithmic factors). 
The hint construction of PIANO is simple and elegant. Briefly, the client partitions the database into $\sqrt{n}$ ``chunks''. Each hint is an XOR of $\sqrt{n}$ elements, one selected at random from each chunk. 
When querying, the client replaces the desired entry in the vector it requests from the server with a stored entry from the same chunk, thus hiding the actual query from the server. The client then XORs the response from the server with the stored hint and the replacement entry to reveal the desired entry.

While the PIANO methods do not support arbitrary query workloads and depart from the standard PIR model, RMS-24 present a single-server solution that follows a similar hint approach but removes this constraint. It halves the size of hints to $\sqrt{n}/2 + 1$, thus slightly increasing client storage (by factor $2$). For retrieval, the client removes the desired entry from the request vector, then generates a ``dummy'' hint that covers the remaining $\sqrt{n}/2$ chunks (including the chunk containing the desired entry). The server then sends back an XOR for each hint, one for the real hint and one for the dummy hint, thus reducing retrieval complexity to $2\cdot\beta$.

\paragraph{Our results.}

The core of our new approach is a new hint construction, which is not based on sharding and is generally simpler than previous hint schemes.
The construction uses multisets of $\sqrt{n}$ elements selected uniformly at random among all $\sqrt{n}$-multisets. 
The key insight is that if any individual element is dropped from a multiset, the remaining $(\sqrt{n}-1)$-multiset is distributed uniformly and independently of the dropped element among all $(\sqrt{n}-1)$-multisets. We refer to such a $(\sqrt{n}-1)$-multiset as ``redacted''. Therefore, a redacted $(\sqrt{n}-1)$-multiset does not reveal any information about the removed element. 

Employing this idea, for the preprocessing stage the client samples multisets uniformly at random among all $\sqrt{n}$-multisets.
A sampled multiset serves as a \textit{hint} as follows: the client stores a pair of values, a succinct representation of the indices that make up the multiset and an XOR of the actual entries in it. 
As usual, the number of hints needs to be $\Tilde{O}(\sqrt{n})$ in order to achieve coverage of all entries with high probability.

In each retrieval step, the client finds a hint that contains the desired index, redacts it, and queries the server for the redacted multiset.
The server XORs the redacted hint and returns a single value to the client. The client then XORs the retrieved value with the stored hint in order to retrieve the desired element.

\basename achieves the following result in the cooperative-server setting:

\begin{theorem}
    There exists an IT-PIR scheme with per-query communication of \textbf{exactly one word of size $\beta$}, amortized communication of $\sqrt{n}\cdot\beta$, client storage of $O(\sqrt{n} \cdot \ln{n} \cdot \beta)$, and amortized server computation of $\sqrt{n}$.
\end{theorem}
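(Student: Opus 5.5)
The proof is constructive: exhibit \basename and verify correctness, privacy, and each cost bound. Throughout, $m=\sqrt{n}$.

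\textbf{Construction.} In preprocessing, the client streams the database once. It samples $M = c\,m\ln n$ multisets $S_1,\dots,S_M$, each uniform over all $m$-multisets of $[n]$. For each $j$ it stores a succinct description of $S_j$ together with the parity $P_j=\bigoplus_{i\in S_j} D_i$, with multiplicity counted, so repeated indices cancel in pairs. If descriptions are stored as seeds or as $m$ indices, this gives $O(m\ln n\cdot\beta)$ storage when $\beta\ge\log n$.

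To query index $x$, the client picks a hint $S_j$ containing $x$ and removes one copy of $x$, giving the redacted multiset $S_j'$. It sends $S_j'$ to the server. The server returns the single word $R=\bigoplus_{i\in S_j'}D_i$, at cost $m-1$ XORs. The client outputs $R\oplus P_j = D_x$. This establishes correctness and the claim of exactly one word of download.

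\textbf{Privacy.} The key lemma, flagged in the text, is this: if $S$ is uniform over $m$-multisets conditioned on containing $x$, and one copy of $x$ is removed, the result is uniform over $(m-1)$-multisets. I would prove it by counting, via the bijection $T\mapsto T\cup\{x\}$ between $(m-1)$-multisets and $m$-multisets containing $x$.

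Here lies the main obstacle. Under the uniform distribution on multisets, this map is a bijection, so the redacted set is uniform and independent of $x$. The subtle points are multiple queries and the consumed hint, which must be replaced.

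After a query, the used hint is discarded, and the client needs a fresh uniform multiset conditioned on containing $x$, with its parity known. It obtains one by sampling a uniform $(m-1)$-multiset $T$ and setting $S=T\cup\{x\}$. By the same bijection, $S$ is distributed as a uniform multiset conditioned on containing $x$.

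The client still needs $\bigoplus_T D$. It gets this by including $T$ in the next interaction, or by reusing the query structure. To keep download at exactly one word, I would fold the refresh into the amortization: after every $\Theta(m)$ queries, the client re-streams the database, at cost $n\beta$ per $m$ queries, which is $m\beta$ amortized.

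With a full re-stream every $m$ queries, independence across queries holds as follows. The $\le m$ hints consumed in a window are distinct. Conditioned on the retained hint set, each query's redacted set is a fresh uniform $(m-1)$-multiset, independent of the others and of the targets. This follows from the lemma together with the symmetry of i.i.d.\ hints: the choice of which hint is used depends only on membership of $x$. I expect the careful argument to be a hybrid over queries in which each transcript is replaced by an independent uniform $(m-1)$-multiset.

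\textbf{Coverage and costs.} An index $x$ lies in a given random multiset with probability about $1-(1-1/n)^m\approx 1/m$. So with $c\,m\ln n$ hints, the chance that $x$ is uncovered is at most $n^{-c}$, and even after $m$ deletions coverage fails with negligible probability by a union bound.

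If no hint covers $x$, the client sends a uniform dummy $(m-1)$-multiset to preserve privacy and falls back to the next re-stream. This keeps per-query traffic at exactly one word.

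Summing up:
\begin{itemize}
\item preprocessing costs $n\beta$ per $m$ queries, for amortized communication $m\beta=\sqrt{n}\beta$;
\item server work is $n/m + (m-1) = O(\sqrt{n})$ amortized;
\item client storage is $O(\sqrt{n}\ln n\cdot\beta)$,
\end{itemize}
which establishes the theorem.
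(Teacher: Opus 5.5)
Your construction, the bijection lemma, and the cost accounting all match the paper. The gap is the multi-query privacy step: you discard consumed hints with no replenishment and claim that, within a window, each redacted set is a fresh uniform $(m-1)$-multiset independent of the targets, by ``symmetry of i.i.d.\ hints''. That symmetry does not survive conditioning on earlier selections.

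\begin{itemize}
\item The hint deleted at the query for $x_1$ was chosen \emph{because} it contains $x_1$. So the survivors hold, in expectation, about one fewer $x_1$-hint than a fresh pool of the same size. Under a first-match rule it is worse: every hint scanned before the match is conditioned to avoid $x_1$.
\item Hence, for a later target $x_2\neq x_1$, the hint chosen among survivors containing $x_2$ is not uniform over $R_{x_2}$. Under a uniform choice among matches, it contains $x_1$ with probability roughly $(1-1/(c\ln n))/\sqrt{n}$ instead of about $1/\sqrt{n}$. So the law of $(P_1,P_2)$ already depends on $x_1$.
\item These deficits accumulate. Over a window of $\sqrt{n}$ queries, the count $\sum_{s<t}\mathbf{1}[i_s\in P_t]$ has mean about $\sqrt{n}/2$ under a disjoint alternative sequence. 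Under $(i_1,\dots,i_Q)$ it is shifted down by order $\sqrt{n}/\ln n$, against fluctuations of order $n^{1/4}$. The leakage is therefore far from negligible.
\end{itemize}

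Re-streaming every $\sqrt{n}$ queries only resets this bias between windows; it does not remove it within one. This is exactly the drift the paper warns about, citing PIANO: removing only consumed hints skews the pool toward unqueried indices.

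What is missing is a replenishment step that costs no communication, together with entry caching. In the paper, each hint carries a designated replacement element $p$ whose value is stored. After a hint for $i$ is consumed, the client rewrites a random surviving hint $R_x$ to $(R_x\uplus\{i\})\setminus\{p\}$. It patches that hint's parity using $D_i$ (just recovered) and the stored $D_p$. Queried entries are cached until the next refresh. The multi-query argument is then stated under the invariant that any hint later used for target $i$ is uniform over $R_i$, which this update is meant to maintain, and the single-query lemma is applied round by round.

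Your idea of forming $T\uplus\{x\}$ from a uniform $(m-1)$-multiset $T$ had the right shape. You dropped it because $\bigoplus_T D$ seemed to need extra download. Taking $T$ locally, as an existing hint minus a uniformly random position, avoids that cost, since removing a uniformly random position from a uniform $k$-multiset leaves a uniform $(k-1)$-multiset.

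Two smaller points:
\begin{itemize}
\item Caching also closes a correctness hole in your coverage claim. Without it, a workload repeating one target more than about $c\ln n$ times in a window exhausts that target's hints, so no union bound rescues coverage after $m$ deletions.
\item Storing each hint as $\sqrt{n}$ explicit indices costs $\Theta(n\log^2 n)$ bits overall. Only the seed representation meets the stated storage bound.
\end{itemize}
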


\noindent Note that this improves retrieval compared to the best known method from $2\cdot\beta$ to $1\cdot\beta$, a substantive improvement when database entries are sizable. \basename also uses a different hint construction technique.

\paragraph{Working with Default Servers}

All of the above solutions assume a cooperative server. Using a simple transformation, we modify \basename's retrieval step to utilize only standard retrieve-by-index API on the server. 
Using this transformation, we obtain \sysname, a solution in the default server setting in which each query retrieves $\sqrt{n}$ database entries from server. In \sysname, the client retrieves from the server the (actual) entries corresponding to the remaining indices of the redacted hint, instead of their XOR. The client then XORs the retrieved entries with the stored hint in order to retrieve the desired element.

\begin{theorem}
    There exists an IT-PIR scheme with amortized communication of $\sqrt{n}$, client storage of $O(\sqrt{n} \cdot \log{n} \cdot \beta)$ and amortized server computation of $\sqrt{n}$, that requires no PIR-specific protocols to be carried out by the server.
\end{theorem}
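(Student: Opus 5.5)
The proof is constructive: instantiate \sysname as the default-server transformation of \basename and check privacy, correctness and each cost measure in turn. The scheme is as follows.

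\emph{Preprocessing.} The client streams the database once. It samples $M = c\sqrt{n}\ln n$ multisets $S_1,\dots,S_M$, each uniform among all $\sqrt{n}$-multisets of $[n]$, plus a pool of backup multisets. Each $S_j$ is represented succinctly by a PRF key or seed and stored together with its parity $P_j=\bigoplus_{i\in S_j} DB[i]$.

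\emph{Query for index $x$.} The client finds a hint $S_j\ni x$ and removes one copy of $x$, obtaining the redacted $(\sqrt{n}-1)$-multiset $S_j'$. It then issues $\sqrt{n}-1$ ordinary read-by-index requests for the elements of $S_j'$. It recovers $DB[x]=P_j\oplus\bigoplus_{i\in S_j'}DB[i]$. Repeated indices in the multiset cancel correctly under XOR, because the parity was computed over the multiset.

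\emph{Privacy.} I would invoke the key redaction fact stated in the introduction: removing any element from a uniform $\sqrt{n}$-multiset leaves a uniform $(\sqrt{n}-1)$-multiset, independent of the removed element. Hence a single query's transcript is distributed independently of $x$. To extend this across queries, I need a refresh step. Because the client now receives all entries of $S_j'$ in full, it already knows $DB[x]$ and every element of $S_j'$. It can therefore replace the consumed hint $S_j$ by a fresh backup multiset conditioned on containing $x$: take a uniform $(\sqrt{n}-1)$-multiset and add $x$. The parity of this replacement is computable from backup data, or directly from the fetched entries by re-inserting $x$ into the just-used redacted set. That last option yields exactly a uniform multiset conditioned on containing $x$, which I would argue restores the hint table's distribution to its pre-query state. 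A hybrid argument over queries then gives IT-privacy. This needs no extra server interaction, so the server only ever sees plain index reads. This consumption/refresh step, and proving that the joint distribution of the stored hints is unchanged after each query, is the main obstacle. I would first try showing, by the exchangeability of multiset elements, that $S_j$ conditioned on the transcript is again distributed as a uniform multiset conditioned on containing $x$, so the table is identically distributed.

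\emph{Costs and correctness.} Correctness with high probability follows because each index lies in a fixed hint with probability about $1-(1-1/n)^{\sqrt{n}}\approx 1/\sqrt{n}$. With $c\sqrt{n}\ln n$ hints, every index is therefore covered except with probability $n\cdot n^{-c}$; a union bound over queries handles the rest. Storage is $O(\sqrt{n}\log n)$ parities of size $\beta$, giving $O(\sqrt{n}\log n\cdot\beta)$. Per-query communication and server work are $\sqrt{n}-1$ index reads. Amortizing the one-time $O(n)$ preprocessing over $\sqrt{n}$ or more queries keeps both amortized communication and amortized server computation at $O(\sqrt{n})$ entries, which gives the statement.
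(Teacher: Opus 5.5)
Your construction, your use of the redaction fact for a single query, the coverage union bound and the cost accounting all match the paper's treatment of \sysname. The plan goes wrong at the refresh step, which you correctly flag as the crux, and specifically at the option you say gives exactly the right distribution: re-inserting $x$ into the just-used redacted set. Since $S_j' \uplus \{x\} = S_j$, this option puts the consumed hint itself back into the table. Its \emph{marginal} law is indeed uniform on the multisets containing $x$. But it is a deterministic function of the transcript $P = S_j'$ and the secret target $x$, so this is exactly the hint reuse the paper forbids. Concretely, a later query for $x$ lands on the same hint and resends $P$ verbatim. A later query for some $y \in P$ that lands on it sends $P \uplus \{x\} \setminus \{y\}$, and by comparing this with $P$ the server reads off both $x$ and $y$. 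For the same reason, the invariant you plan to prove is false: given $P$ and $x$, ``$S_j$ conditioned on the transcript'' is a point mass, not uniform on multisets containing $x$. The hybrid argument needs more than the right marginal law for the post-query table. It needs the table to have that law \emph{and to be independent of the transcript so far}.

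Your other option, fresh backups, provides exactly this. Keep a pool of independent backup $(\sqrt{n}-1)$-multisets $B$ with stored parities; about $\sqrt{n}$ per phase suffices, so storage stays $O(\sqrt{n}\log n\cdot\beta)$. After the query, install $B \uplus \{x\}$ with parity $\mathrm{par}(B)\oplus DB[x]$ in slot $j$, where $j$ is, say, the first hint containing $x$. Then condition on $j$ and on all other hints:
\begin{itemize}
\item $S_j$ is uniform on the multisets containing $x$ and independent of the other hints.
\item $P$ is a function of $S_j$ alone and is uniform on $(\sqrt{n}-1)$-multisets, whatever $j$ is.
\item The new hint is uniform on the multisets containing $x$ and independent of $P$.
\end{itemize}
Hence the new table has the original law and is independent of $P$. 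Induction over queries then gives transcript privacy, including for repeated targets, with no caching needed. Rerunning preprocessing every $\sqrt{n}$ queries keeps amortized communication and server work at $O(\sqrt{n})$. The paper's own multi-query argument simply \emph{assumes} that its replacement-slot update preserves this invariant, so with the backup option you would be proving a step the paper leaves unproved. A minor point: for a uniform multiset the inclusion probability is exactly $k/(n+k-1)$, not the with-replacement expression you wrote. Both are $\Theta(1/\sqrt{n})$, so your coverage bound is unaffected.
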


Finally, since $\sqrt{n}$ actual database entries are retrieved in each query, \sysname uses them to continuously refresh hints. In this way, essentially no communication beyond what is already inherent to the protocol is needed to refresh hints.

\medskip
In summary, \basename and \sysname achieve the following properties:

\begin{itemize}
    \item \textbf{Small client storage:}
    The client stores $O(\sqrt{n}\log n)$ hints, amounting to
    $O(\sqrt{n}\cdot\log n\cdot \beta)$ space.
    \item \textbf{Sub-linear (amortized) communication:}
    Each query requires requesting exactly $1$ (\basename) and $\sqrt{n}$ (\sysname) database elements.
    \item \textbf{Default server:}
    In \sysname, the server executes \emph{no} PIR-specific protocol beyond retrieving the specific
    indices supplied by the client. There are no proxies, no shuffle-net, and no helper parties.
    \item \textbf{Linear pre-processing with continuous use:}
    In \sysname, the client streams the database once and constructs the hints during this pass. The client then proceeds with continuous querying, refreshing hints from the entries it obtains when querying.
\end{itemize}

\section{Model \& Definitions}

\label{sec:ModelAndDef}

\begin{table}[t]
    \centering
    \small
        \begin{tabular}{|c|p{0.72\linewidth}|}
            \hline
            \textbf{Notation} & \textbf{Meaning} \\ \hline
            \multicolumn{2}{|c|}{\textbf{--- Database ---}} \\ \hline 
            \verb|DB| & Denotes the database. \\ \hline
            $n$ & Number of database entries. \\ \hline
            $\beta$ & Size of the largest entry in the database. \\ \hline
            \multicolumn{2}{|c|}{\textbf{--- Hints ---}} \\ \hline
            $R$ & A single sampled hint \\ \hline
            $k$ & Size of each hint, usually taken to be $k=\sqrt{n}$ \\ \hline
            $m$ and $C$ & Number of stored hints $m$, where $m = 2\cdot C \cdot \ln{n} \cdot (n/k)$.\\ \hline
            $[n]$ & Universe of database indices, i.e.\ $\{1,\dots,n\}$. \\ \hline
            $x$ and $y$ & Entries in a hint primarily used in the context of \textbf{coverage}. \\ \hline
            $i$ and $j$ & Entries in a hint primarily used in the context of \textbf{privacy}. \\ \hline
        \end{tabular}
    \caption{Notation Table for the paper.}
    \label{tab:notation}
\end{table}

In this section, we formalize the problem of private information retrieval (PIR) in the single-server sublinear information theoretic (IT-PIR) setting. At a high level, single-server PIR allows a client to perform a preprocessing phase on a database and then privately perform a number of queries on the database. IT-PIR requires no cryptographic techniques such as homomorphic encryption, necessitating only XORs on the part of the client and server. We complete this formalization in ~\ref{subsec:pir}.

We adopt a semi-honest, or honest-but-curious, threat model, in which any adversary can observe all communication and make any attempt to infer the user's intent, but makes no attempt to disrupt or alter the protocol. The complete formalization of this trust model is in \ref{subsec:trust}.

\subsection{Private Information Retrieval (PIR)}
\label{subsec:pir}

A database \verb|DB| of $n$ entries is assumed, where the entries are identified by a unique identifier, typically an index or a key. The largest entry in the database is size $\beta$. In each query, the client wishes to retrieve the entry identified by $i$ from the database. A PIR scheme must ensure:

\begin{itemize}
    \item \textbf{Correctness}: The client can retrieve the entry requested each time.
    \item \textbf{Privacy}: The server learns nothing about the entry the client has retrieved.
\end{itemize}

These definitions may also be formalized as a game between the client, who wishes to keep their requests private, and an adversary who observes the server.

\begin{enumerate}
    \item The adversary has, through observation, narrowed the client's next choice down to one of two entries, identified by $i$ and $j$.
    \item The client flips a coin to choose one of the two entries. If the coin lands with the face side up, the client will retrieve $i$. Otherwise, the client will retrieve $j$.
    \item The adversary gains no information as to which entry the client may have chosen.
\end{enumerate}

A single-server sublinear IT-PIR scheme requires a preprocessing phase \cite{cryptoeprint:2024/976}, in which the client will stream the entire database before using the preprocessed database to query individual entries at a lower cost. The preprocessing phase must therefore not reveal any information indicating which entry or entries the client intends to retrieve - indeed, the preprocessing phase must allow \textit{any} entry to be retrieved.

Repeated queries must also not reveal any information about the desired entries (including repeated queries of the same entry), which necessarily requires that, in a single-server PIR scheme, the server must be unaware of the preprocessed hints \textit{and} the same hint can never be used twice (since the likelihood of two hints containing the exact same entries with one difference is highly unlikely, using the same hint will reveal the item that was retrieved). This can be achieved using a \textit{replenishment} scheme to ensure that the entries accessed with a used hint can still be accessed without giving away any additional information about that particular hint.

\subsection{Trust Model}
\label{subsec:trust}

This paper considers a trust model in which the server is \textbf{semi-honest} (also referred to as honest-but-curious). The server faithfully stores the database and correctly responds to all retrieval requests, but it may attempt to infer the identity of the queried entry identifier from the queries it observes. The server does not deviate from the protocol - it does not send malformed responses, abort, or selectively refuse to answer. However, it retains and may analyze all queries it receives.

Additionally, \textbf{all communications between the client and server are assumed to be public}: any message transmitted during the protocol - including queries, responses, and any preprocessing traffic - is observable by a passive adversary. Under this model, privacy requires that the observable transcript of the protocol reveals no information about the entry identifier $i$ being queried, even to an adversary who sees the entirety of the communication.

Finally, the entries on the database server must be static.

\section{Protocol \& Algorithms}
\label{sec:solution}

This section details the \basename protocol in five parts. We describe the preprocessing phase, in which the client derives a compact hint-set from the database which will become the basis for online queries, and requirements of the hint construction in \ref{subsec:preproc}. Continuing in \ref{subsec:query}, we detail how the client chooses a hint and constructs a request to the server which is used to retrieve a desired entry. Using a hint to retrieve an entry \textit{consumes} that hint, and so the protocol must then replace that hint with another suitably constructed hint, as detailed in \ref{subsec:replenishment}. In order to minimize communication requirements, an entry can be \textit{cached} for the duration of the query phase as detailed in \ref{subsec:entryCaching}, such that any repeated requests are retrieved from client storage. Finally, in \ref{subsec:continuous_preproc}, we describe an improvement which allows \textit{continuous preprocessing}, eliminating the reqirement for a distinct preprocessing phase.

\subsection{Preprocessing \& Hint Construction}
\label{subsec:preproc}

Preprocessing commences when the client streams the database and assembles the
hint set. In order to maintain privacy in the default-server setting, the
client must stream the entire database, so the communication overhead of this phase is
$O(n\beta)$.

During the streaming pass, the client constructs $m$ hints. Each hint $R$ is a
size-$k$ multiset of database indices together with the XOR of the
corresponding database entries. While the specific numerical choices for $k$ and $m$ are discussed in
Section~\ref{sec:MathAnalysis}, intuitively, $k$ controls the per-query online
cost, while $m$ is chosen large enough so that every database entry is covered
by at least one hint with high probability.

Rather than storing the full list of $k$ indices for each hint, the client stores each hint compactly as a single 64-bit seed together with its
precomputed $\beta$-byte XOR value. The seed serves as a succinct identifier for the multiset: given the same seed, Algorithm~\ref{alg:expand-multiset-seed} deterministically reconstructs the
same size-$k$ multiset whenever the hint is needed.

The multiset expansion proceeds by a direct stars-and-bars bijection. Let $N = n+k-1$.
The client first uses a seeded pseudorandom generator to sample a uniformly random $k$-subset
\[
S = \{u_1,\dots,u_k\} \subseteq [N],
\qquad u_1 < \cdots < u_k.
\]
It then converts this subset into a size-$k$ multiset over $[n]$ by setting
\[
h_t = u_t-(t-1)
\qquad \text{for } t=1,\dots,k.
\]
This subtraction removes the positional offsets that were introduced to make repeated multiset elements distinct. Equivalently, the forward map sends a multiset
\[
1 \le h_1 \le h_2 \le \cdots \le h_k \le n
\]
to the strictly increasing subset
\[
u_t = h_t+(t-1)
\qquad \text{for } t=1,\dots,k.
\]
For example, the multiset $(2,2,3)$ corresponds to the subset $(2,3,5)$:
the second copy of $2$ is shifted upward by one position, and the $3$ is
shifted upward by two positions. Applying the inverse map recovers the original
multiset as
\[
2,\quad 3-1=2,\quad 5-2=3.
\]

Because this correspondence is bijective, sampling a uniform $k$-subset of $[n+k-1]$ is equivalent to sampling a uniform size-$k$ multiset over $[n]$.
Thus, the persistent representation of a hint consists only of its seed and its stored XOR value. This significantly reduces client storage while preserving the ability to reconstruct the exact hint on demand. A full accounting of
storage, together with the derivation of the coverage guarantee, is deferred to
Section~\ref{sec:MathAnalysis}.

\subsection{Online Query}
\label{subsec:query}

To retrieve the database entry at index $i$, the client must select a hint that contains $i$, use that hint to reconstruct the entry, and do so in a way that reveals nothing to the server about the value of $i$.

The client first selects a hint $R_i$ that covers the target index $i$. It then expands the seed of $R_i$ using Algorithm~\ref{alg:expand-multiset-seed} to recover the underlying size-$k$ multiset, removes one copy of $i$ from that multiset, and sends the remaining $k-1$ indices to the server. This operation is referred to as \emph{redaction}. Algorithm~\ref{alg:PuncHint} describes the corresponding retrieval procedure. Retrieval from the server proceeds as follows:

\begin{description}
    \item[\basename.] In the cooperative setting, the server returns the XOR of the requested entries.

    \item[\sysname.] In the default-server setting, the server returns the requested entries individually, and the client performs the XOR locally. 

\end{description} 

In either scheme, since the stored hint value was originally computed as the XOR of all $k$ entries in the multiset, and the requested multiset contains the remaining $k-1$ entries after redaction, the final XOR reveals exactly the desired entry at index $i$.

\subsection{Hint Replenishment}
\label{subsec:replenishment}

Each hint is single-use, meaning that once a hint has been used in a query, it is discarded and never selected again. This prevents the server from being able to use the transcript of all previous requests to correlate multiple redacted queries derived from the same underlying multiset. The privacy guarantee then follows from the symmetry of the multiset construction: when one copy of the queried index is removed from a uniformly chosen covering hint, the resulting redacted multiset is distributed independently of the queried index. The server therefore observes a query that is statistically indistinguishable from a query for any other target. The formal privacy proof is deferred to Section~\ref{sec:MathAnalysis}.

A naive implementation of the single-use rule would simply discard each consumed hint and leave the remaining pool untouched. However, as noted in~\cite{IEEESP:10646686_PIANO_2024_5}, simply removing a used hint introduces a subtle bias. Over time, removing only the consumed hints would skew the remaining hint set toward indices the client has not yet queried, which leaks information about the client's access pattern through the residual structure of the hint pool.

Grounded in previous approaches~\cite{IEEESP:10646686_PIANO_2024_5, cryptoeprint:2023/452_PIANO_11_12, RenLing2024SaPA}, to counteract this drift each hint $R$ is augmented during preprocessing with a single \emph{replacement slot}: one of its $k$ multiset elements, denoted $p$, is designated as replaceable and stored explicitly. When a hint $R_i$ covering index $i$ is consumed and removed from the active pool, the client picks a uniformly random surviving hint $R_x$ and rewrites its replacement element from $p$ to $i$, updating the stored XOR value; $i$ is added to $R_x$ by XORing the $i$ with $R_x$, and similarly, $p$ is removed the same way so that the updated hint multiset is $(R_x \uplus \{i\}) \setminus \{p\}$.

\subsection{Entry Caching}
\label{subsec:entryCaching}

Once the client has queried index $i$, it stores the value of the entry at $i$ locally until the next hint refreshing phase (i.e. until $\sqrt{n}$ queries have been completed). Thus, any subsequent request for $i$ is then answered directly from local storage. This both eliminates redundant work and avoids the need to ever cover the same index twice in one phase.

As shown in the coverage analysis (Section ~\ref{subsec:mincoverage}), by choosing the number $m$ of hints appropriately, the hints remaining in the pool after  $\sqrt{n}$ queries suffice to cover non-queried indices with high probability. After that, the hint pool is refreshed.

\subsection{Continuous Pre-Processing}
\label{subsec:continuous_preproc}

The initial hint set supports only a bounded number of online queries, since each hint is single-use and is discarded once consumed. To sustain a longer query stream, preprocessing can instead be carried out continuously in the background, so that fresh hints are generated as older ones are used.

This is possible because each hint consists only of a seed together with the XOR of the corresponding database entries. In \sysname, the online phase already returns the raw entries needed to reconstruct redacted hints: in the
default-server setting, each query causes the server to return the $k-1$ entries of the redacted multiset individually. These downloaded entries may
therefore be reused not only to recover the current target, but also to help
assemble new hints in the background.

Concretely, the client may continuously sample fresh seeds, expand them via
Algorithm~\ref{alg:expand-multiset-seed}, and maintain partial XORs for the resulting multisets as matching entries arrive through normal online queries. Once a certain number of queries have been sent and we have enough information, we refresh the hint multiset, throwing away all old hints and finishing off any pre-processing steps left. In this way, hint generation becomes an ongoing local maintenance task rather than a separate
one-shot phase.

\begin{figure}
    \centering
    \includegraphics[alt={A diagram illustrating a Private Information Retrieval (PIR) or similar database query model. A database (DB) is shown as a row of n indexed cells (0, 1, 2, … n−1). Arrows point to a central user node, indicating two types of database access: reading the entire DB once, or reading k−1 words per query. The user node also has an arrow pointing downward to a storage symbol (a stack of disks), labeled "store m words once", representing local storage the user maintains after a preprocessing phase.}, width=0.5\linewidth]{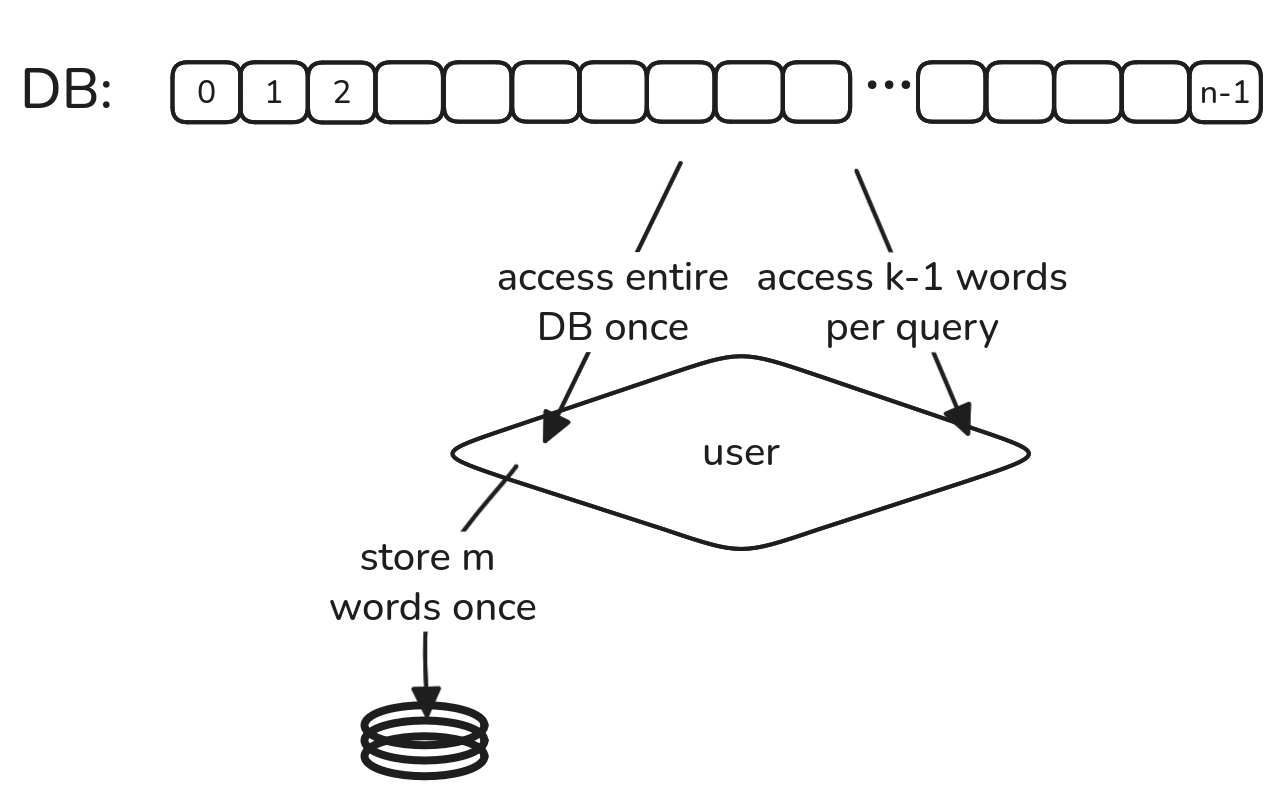}
    \caption{Communication and storage complexity for the default-server scheme.}
    \label{fig:placeholder}
\end{figure}

\begin{algorithm}
    \caption{Redacted Hint Retrieval}
    \label{alg:PuncHint}

    \KwIn{Redacted list of entries to be retrieved from database}
    \KwResult{Redacted hint to be returned to user}

    \tcc{`retrieve(i)` is a function which retrieves the $i$th entry from the database, while `redact` is the redacted list of entries to be sent to the database.}

    result = retrieve(redact[0])\; 
    
    \For{i = 1 \KwTo redact.size - 1}{
    
        result = result $\oplus$ retrieve(redact[i])\;
        
    }

    return result\;
\end{algorithm}

\begin{algorithm}
    \caption{\textsc{ExpandMultisetFromSeed}$(n,k,\mathit{seed})$}
    \label{alg:expand-multiset-seed}

    \KwIn{Universe size $n$, multiset size $k$, 64-bit seed $\mathit{seed}$}
    \KwResult{A uniformly random size-$k$ multiset $R=(h_1,\dots,h_k)$ over $[n]$}
    
    $N \gets n+k-1$\;
    
    \tcc{Sample a uniformly random $k$-subset $S \subseteq [N]$ using a seeded PRG}
    $S \gets \textsc{FloydSample}(N,k,\mathit{seed})$\;
    
    \tcc{Sort $S$ in ascending order so that $u_1 < \cdots < u_k$}
    $(u_1,\dots,u_k) \gets \textsc{Sort}(S)$\;
    
    \tcc{Apply the inverse stars-and-bars map}
    \For{$t \gets 1$ \KwTo $k$}{
        $h_t \gets u_t - (t-1)$\;
    }
    
    \Return $(h_1,\dots,h_k)$\;
\end{algorithm}

\section{Correctness \& Privacy Analysis}
\label{sec:MathAnalysis}
\begin{table}[t]
    \centering
    \small
        \begin{tabular}{|c|p{0.72\linewidth}|}
        \hline
        \multicolumn{2}{|c|}{\textbf{--- Probabilities ---}} \\ \hline
            $p$ & Probability that a fixed entry appears in a uniformly sampled hint:
            \[
            p = \frac{k}{n+k-1}.
            \]
            \\ \hline
            $M$ & Total number of size-$k$ multisets over $[n]$:
            \[
            M = \binom{n+k-1}{k}.
            \]
            \\ \hline
            $S_y$ & Number of size-$k$ multisets over $[n]$ that contain a fixed entry $y$.
            \\ \hline
            $Y_y$ & Number of hints, out of the $m$ total sampled hints, that contain entry $y$. \\ \hline
            $\delta$ & Slack parameter in the lower-tail Chernoff bound, used to show $Y_y \ge (1-\delta)\mathbb{E}[Y_y]$ with high probability. \\ \hline
        \end{tabular}
    \caption{Notation Table for the paper.}
    \label{tab:correctness_notation}
\end{table}

In this section, we present proofs for correctness and privacy of our construction. We define correctness as the ability to retrieve any item with high probability, bounded by a failure probability $\delta$. Minimum coverage is detailed in \ref{subsec:mincoverage}, while the coverage required to eliminate a replenishment requirement in \ref{subsec:intendedCoverage}. We then discuss how \basename satisfies privacy requirements of privacy and indistinguishability, ensuring that the intended target is indistinguishable from a query of any random item, in \ref{subsec:privacy}.

\subsection{Requirements for Minimum Coverage}
\label{subsec:mincoverage}

Let $Y_y$ denote the number of sampled hints, out of the $m$ total hints,
that contain $y$. Since \sysname samples hints uniformly without replacement from
the full multiset space, $Y_y$ is hypergeometric. More explicitly, letting
\[
M := \binom{n+k-1}{k}
\]
denote the total number of size-$k$ multisets over $[n]$, and
\[
S_y := \binom{n+k-2}{k-1}
\]
denote the number of such multisets that contain $y$, we have
\[
Y_y \sim \mathrm{Hypergeometric}(M,S_y,m).
\]
In particular,
\[
\mathbb{E}[Y_y]
=
m\frac{S_y}{M}
=
m\frac{k}{n+k-1}.
\]

This immediately gives the basic coverage bound. Defining
\[
X_y := \mathbf{1}\{Y_y=0\},
\]
the event $X_y=1$ occurs exactly when all $m$ sampled hints come from the
$M-S_y$ multisets that avoid $y$. Hence
\[
\Pr[X_y=1]
=
\frac{\binom{M-S_y}{m}}{\binom{M}{m}}.
\]
Therefore, by linearity of expectation,
\[
\mathbb{E}\!\left[\sum_{y=1}^n X_y\right]
=
n\cdot \frac{\binom{M-S_y}{m}}{\binom{M}{m}}.
\]
Applying Markov's inequality gives
\[
\Pr\!\left[\sum_{y=1}^n X_y>0\right]
\le
n\cdot \frac{\binom{M-S_y}{m}}{\binom{M}{m}}.
\]
Thus, to achieve full coverage with failure probability at most $\delta$, it
suffices that
\[
n\cdot \frac{\binom{M-S_y}{m}}{\binom{M}{m}} \le \delta.
\]

\subsection{Requirements for Intended Coverage} 
\label{subsec:intendedCoverage}

Having established how many hints are needed to ensure that every entry is covered at least once, the next step is to strengthen the requirement and ask for a larger initial amount of coverage per entry. This is needed so that \sysname can answer a sequence of queries without requiring a replenishment scheme. 

Suppose $m$ is chosen so that
\[
\mathbb{E}[Y_y] = 2C\ln n
\]
for some constant $C>0$, i.e.,
\[
m = \frac{2C\ln n}{p}.
\]
The goal is then to show that, for sufficiently large constant $C$, every entry is covered by a constant fraction of its expectation with high probability.
For a fixed $y$, a standard lower-tail Chernoff bound for the hypergeometric distribution gives that for any fixed constant $\delta\in(0,1)$,
\[
\Pr\!\left[Y_y \le (1-\delta)\mathbb{E}[Y_y]\right]
\le
\exp\!\left(-\frac{\delta^2\mathbb{E}[Y_y]}{2}\right).
\]
Substituting $\mathbb{E}[Y_y]=2C\ln n$ yields
\[
\Pr\!\left[Y_y \le (1-\delta)2C\ln n\right]
\le
\exp\!\left(-\frac{\delta^2\cdot 2C\ln n}{2}\right)
=
n^{-\delta^2 C}.
\]
Applying a union bound over all $y\in[n]$ gives
\[
\Pr\!\left[\exists y\in[n] : Y_y \le (1-\delta)2C\ln n\right]
\le
n\cdot n^{-\delta^2 C}
=
n^{\,1-\delta^2 C}.
\]
Thus, whenever
\[
\delta^2 C > 1,
\]
it follows that
\[
\min_{y\in[n]} Y_y \ge (1-\delta)2C\ln n
\qquad\text{w.h.p.}
\]

\paragraph{Correctness} For correctness, we must show that there is a high likelihood of the user being able to retrieve the required entry. In our case, we show that the likelihood of not finding a candidate hint for a specific entry is approximately $e^{-\alpha}$, making the number of uncovered entries about $ne^{-\alpha}$ (per \ref{subsec:mincoverage}). A simple solution to ensure access to all entries in the database would be to simply store any entries that are not used to construct a hint, as the number of entries that will not be in a hint is relatively low (ideally, not more than $m$ entries such that the storage required is no more than $2m$).

\subsection{Privacy Analysis} 
\label{subsec:privacy}

Per Section~\ref{subsec:pir}, privacy requires that the server cannot distinguish between the case in which the queried entry is $i$ and the case in which the queried entry is $j$. Equivalently, for any two candidate targets
$i,j \in [n]$, the observable transcript of the protocol must have the same distribution in both cases.

Before proving privacy for the actual construction, it is helpful to explain
why the use of multisets is essential. The discussion below first shows that an ordered-hint variant would be insecure, and then explains why the multiset representation removes exactly this asymmetry.

\noindent\textbf{Stage 1 (why ordered hints fail).}
Suppose the server observes exactly one copy of entry $j$ in the redacted hint. If the queried target was $j$, then the original hint must have contained two copies of $j$, so that one copy remained after redaction. If instead the
queried target was some $i \neq j$, then the same observation could arise from an original hint containing one copy of $i$ and one copy of $j$, with the copy of $i$ being removed.

If hints are sampled as ordered tuples, these two cases do not occur with the same probability. For example, when $k=2$, the ordered hints $(i,j)$ and
$(j,i)$ are distinct, whereas $(j,j)$ is the only ordered hint containing two copies of $j$. Thus, after observing a single visible copy of $j$, the server can distinguish between the cases ``the client redacted $i$'' and ``the client redacted $j$'' with nonzero advantage. In other words, an ordered-hint variant would not satisfy privacy.

\noindent\textbf{Stage 2 (why multisets fix the problem).}
This asymmetry disappears when hints are sampled as multisets. Indeed, fix two distinct entries $i,j \in [n]$. The number of size-$k$ multisets containing one copy each of $i$ and $j$ is equal to the number of size-$k$ multisets containing two copies of $j$ and no copy of $i$.

In the first case, one copy of $i$ and one copy of $j$ are fixed, so the remaining $k-2$ elements may be chosen as any multiset over the remaining
$n-2$ entries. This gives
\[
\binom{(n-2)+(k-2)-1}{k-2}
\]
possibilities. In the second case, two copies of $j$ are fixed and $i$ is excluded, so again the remaining $k-2$ elements may be chosen as any multiset over the same $n-2$ entries, yielding the same count,
\[
\binom{(n-2)+(k-2)-1}{k-2}.
\]
Thus, unlike in the ordered case, observing one visible copy of $j$ does not favor one queried target over the other.

This symmetry is precisely what underlies the privacy of the multiset construction. The formal proof below strengthens this intuition by showing that, for every redacted size-$(k-1)$ multiset $P$ and every queried target $i$, there is exactly one size-$k$ multiset containing $i$ that redacts to $P$, namely $P \uplus \{i\}$.

\paragraph{Formal privacy proof.}
Privacy for the multiset construction follows from a simple symmetry argument. Once one copy of the queried index is removed, the resulting redacted multiset is distributed independently of the queried index.

Let $M_{k-1}$ denote the set of all size-$(k-1)$ multisets over $[n]$.
Recall that
\[
|M_{k-1}|=\binom{n+k-2}{k-1}.
\]
\begin{lemma}[Single-query redaction hides the target]
\label{lem:single-query-privacy}
Fix any target index $i \in [n]$. Let $R_i$ denote the set of all size-$k$
multisets over $[n]$ that contain at least one copy of $i$. Sample a hint $R$
uniformly from $R_i$, and remove one copy of $i$. Let $P$ denote the resulting redacted multiset of size $k-1$.

Then $P$ is distributed uniformly over $M_{k-1}$.
\end{lemma}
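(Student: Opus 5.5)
The plan is to exhibit an explicit bijection between $R_i$ and $M_{k-1}$ and then push the uniform distribution forward along it. Define the redaction map $\phi_i : R_i \to M_{k-1}$ by $\phi_i(R) = R \setminus \{i\}$, where exactly one copy of $i$ is removed. Since multisets are determined by their multiplicity functions, $\phi_i$ simply decrements the multiplicity of $i$ by one and leaves all other multiplicities unchanged. The result is a well-defined size-$(k-1)$ multiset over $[n]$, because $R$ contains at least one copy of $i$. The candidate inverse is $\psi_i : M_{k-1} \to R_i$ with $\psi_i(P) = P \uplus \{i\}$. This map increments the multiplicity of $i$, so its output has size $k$ and contains $i$; hence it lies in $R_i$.

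The key step is to check that $\phi_i$ and $\psi_i$ are mutually inverse. This is immediate at the level of multiplicity functions. For every $P$, $\phi_i(\psi_i(P)) = P$, since the copy of $i$ that was added is the one removed. For every $R \in R_i$, $\psi_i(\phi_i(R)) = R$, since the copy removed is restored. Multisets carry no ordering, so no choice is involved in which copy of $i$ is removed. This is exactly the point where the ordered variant of Stage 1 breaks down, and it should be remarked on in the proof. Hence $\phi_i$ is a bijection, and for every $P \in M_{k-1}$ the preimage $\phi_i^{-1}(P)$ consists of the single hint $P \uplus \{i\}$.

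It then follows that, with $R$ uniform on $R_i$,
\[
\Pr[\phi_i(R) = P] = \frac{1}{|R_i|} = \frac{1}{|M_{k-1}|} = \binom{n+k-2}{k-1}^{-1}
\]
for every $P$. The middle equality uses the bijection. As a consistency check, it also matches the count $S_y = \binom{n+k-2}{k-1}$ from Section~\ref{subsec:mincoverage}. So $P$ is uniform on $M_{k-1}$.

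The only subtle point, which I expect to be the main thing to state carefully rather than a real obstacle, is that uniformity does not depend on $i$. For distinct targets $i$ and $j$ the redacted distributions are therefore identical, which is the privacy property needed for the single-query game of Section~\ref{subsec:pir}. The rest is routine.
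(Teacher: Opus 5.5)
Your proposal is correct and follows essentially the same route as the paper: both rest on the bijection $P \mapsto P \uplus \{i\}$ from $M_{k-1}$ onto $R_i$, so each redacted multiset has exactly one preimage and therefore probability $1/|R_i| = 1/\binom{n+k-2}{k-1}$. Your check of both compositions via multiplicity functions is just a more careful version of the paper's one-line assertion that this preimage is unique.
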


\begin{proof}
Fix any redacted multiset $P \in M_{k-1}$. There is exactly one multiset in
$R_i$ that produces $P$ after removing one copy of $i$, namely
\[
H = P \uplus \{i\},
\]
where $\uplus$ denotes multiset union. Thus the map
\[
P \mapsto P \uplus \{i\}
\]
is a bijection from $M_{k-1}$ to $R_i$. Therefore,
\[
|R_i| = |M_{k-1}| = \binom{n+k-2}{k-1}.
\]
Since $R$ is sampled uniformly from $R_i$, every $P \in M_{k-1}$ occurs with
probability
\[
\Pr[P=P_0 \mid \text{target } i]
=
\frac{1}{|R_i|}
=
\frac{1}{\binom{n+k-2}{k-1}}.
\]
Hence $P$ is uniform over $M_{k-1}$.
\end{proof}

Lemma~\ref{lem:single-query-privacy} immediately yields one-query privacy: for
any two candidate targets $i,j \in [n]$, the server observes exactly the same
distribution on the redacted multiset.

\paragraph{Multi-query transcript privacy.}
Assume that, after each query, the client's local update rule preserves the invariant that any fresh hint later used for target $i$ is distributed
uniformly over $R_i$. Then for any two target sequences
\[
(i_1,\dots,i_Q), (j_1,\dots,j_Q)\in[n]^Q,
\]
the induced distributions on the full server transcript
\[
(P_1,\dots,P_Q)
\]
are identical.

\begin{proof}
Privacy must also hold over a sequence of queries. The argument above shows that in any individual round, redacting a fresh hint hides the queried target.
To extend this to repeated queries, it remains to show that the distribution of the fresh hints used in later rounds is preserved over time.

After a query for index $i$, the consumed hint is discarded. The client may then update another randomly selected stored hint locally by substituting $i$ for one selected element, together with the corresponding XOR update. This
local update is not visible to the server. By assumption, it preserves the invariant that, whenever a later query uses a fresh hint containing some target $i$, that hint is distributed uniformly over $R_i$.

Accordingly, at every query round $t$, if the queried target is $i_t$, the client uses a fresh size-$k$ hint distributed uniformly over $R_{i_t}$,
removes one copy of $i_t$, and sends the resulting redacted multiset $P_t$.
By Lemma~\ref{lem:single-query-privacy}, $P_t$ is uniform over $M_{k-1}$ and therefore independent of $i_t$.

Now consider any sequence of $Q$ client queries with targets
\[
(i_1,\dots,i_Q)\in[n]^Q,
\]
and let $(P_1,\dots,P_Q)$ denote the redacted multisets observed by the server. For each round $t$, the conditional distribution of $P_t$ given any previously observed transcript prefix $(P_1,\dots,P_{t-1})$ is uniform over
$M_{k-1}$ and hence does not depend on the queried target $i_t$. Therefore, for any two target sequences
\[
(i_1,\dots,i_Q), (j_1,\dots,j_Q)\in[n]^Q,
\]
the induced joint distributions on $(P_1,\dots,P_Q)$ are identical. Thus the
full server transcript reveals no information about which indices were queried.
\end{proof}

\section{A Default-Server Setting Transformation}
\label{sec:discussion}

As evident in Section~\ref{sec:solution}, \basename and \sysname are very similar. \sysname is derived from \basename by having the client obtain the actual entries of a redacted hint and XORing them, rather then asking the server to XOR the retrieved entries. 

The reason this transformation is possible is because in the cooperative paradigm, \basename already has an amortized communication of $O(\sqrt{n})$: 
Although the result of a single query returned from the server is a single XORed result, the preprocessing phase (performed every $\sqrt{n}$ queries) requires communication of $n$. 
This leads to a simple conclusion: If the client retrieves $\sqrt{n}$ entries per query, and each hint contains $\sqrt{n}$ entries, the server does not need to XOR the result of the redacted hint --- the user can simply retrieve the $\sqrt{n}$ entries.

An added benefit is that the client can use the retrieved entries to both retrieve the queried entry \emph{and} to preprocess for the next phase. After an initial setup phase, the system supports continuous query refresh using its standard interaction pattern, incurring essentially no additional communication overhead. This natively incorporates the technique for uninterrupted, continuous querying introduced in PIANO-23 as "unbounded" querying. 

The pre-requisites that make these two transformations possible are as follows:
\begin{itemize}
    \item Only $\sqrt{n}$ entries are locally accessed by the server during a query
    \item No processing is needed on the server side to select entries, they are identified by the client in the clear
\end{itemize}

Therefore, these transformation can be applied to a select number of existing schemes, unlocking the potential of adapting these schemes for use in the default-server paradigm. 

\paragraph{\basename to \sysname} When applying this transformation to \basename, we get \sysname, the full  version of our scheme which works with default servers. Amortized communication remains the same, but the work performed on the server side is reduced --- the server must still retrieve and transmit each entry, but the server performs no specialized PIR techniques. The computation on the server side is replaced by the client, which XORs each incoming entry with the hint and performs batched preprocessing with the same multiset of entries. This results in the ability to run \sysname on a default server using only publicly available retrieval \verb|API|s, such as Wikidata.

\paragraph{PIANO in the default server setting.} In PIANO-23 and PIANO-24, a client query retrieves from the server parities corresponding to a redacted hint\footnote{PIANO calls it a \textit{punctured} hint.}. In PIANO-24, the server accesses $2\sqrt{n}-1$ DB entries, XORs them into $\sqrt{n}$ distinct parities, and sends them back; $\sqrt{n}-1$ of the parities are unused by the client. In PIANO-23, only $\sqrt{n}$ DB entries are accessed by the server and one parity sent back. Both variants incur an amortized communication complexity $O(\sqrt{n}\cdot\beta)$. Therefore, to adapt (either) PIANO to the default server setting, the entries accessed by the server can simply be sent back to the client for parity calculations.

\paragraph{RMS-24 in the default server setting.} The transformation for RMS-24 is very similar. Recall that in the original algorithm, the client requests from the server two redacted hints, one real hint and one dummy, each comprised of $\sqrt{n}/2$ entries. The client discards the dummy hint and XORs the real hint with the original (unredacted) hint, revealing the desired entry. In this case, the client instead requests $\sqrt{n}$ individual entries: $\sqrt{n}/2$ are the entries from the desired hint, excepting the desired entry itself, and $\sqrt{n}/2$ are entries from the remaining partitions.

\paragraph{WR-25 in the default server setting.} WR-25 improves client-side hint storage by utilizing a \textit{hint table} with $n$ entries arranged by the client in a $2n/T \times T$ matrix (usually we set $T = \sqrt{n}$). Each database entry only appears in the hint table once, yielding an optimal hint store of $O(\sqrt{n}\cdot\beta)$ bits (i.e., shaving a logarithmic factor off). 

The scheme employs a novel hint replenishment strategy: previous works discard hints after using them for querying, and therefore need to maintain spares for replenishment. In contrast, the WR-25 fetches in each query all the entries in a hint except the queried index in order to add them back to existing table rows. In this way, client-side storage is reduced, and as an unintended byproduct, the scheme can natively work with a default server, with the server performing $O(\sqrt{n})$ accesses per query and the communication cost $O(\sqrt{n}\cdot\beta)$. 

\section{Complexity Analysis}

In this section, we discuss communication, computation, and client-side storage requirements for \basename and \sysname. 
A comparison of the results with previous works is provided in Section~\ref{sec:comparison}.

\subsection{\basename}

\paragraph{Communication}
In order to pre-process hints for the database, all entries from the database must be streamed. The number of entries in the database is $n$ and the size of each entry is no greater than $\beta$, so the maximum communication requirement for preprocessing hints is always $n\beta$. Following this, for each retrieval the server computes and returns a singular redacted hint. \basename will perform $\sqrt{n}$ retrievals before the scheme must be refreshed, so $n /\sqrt{n}=\sqrt{n}$; thus, the amortized cost of preprocessing over $\sqrt{n}$ retrievals is $\sqrt{n}\cdot\beta$, and adding a single additional redacted hint of size $\beta$ results in negligible additional work, so we have $O(\sqrt{n}\cdot\beta)$ amortized communication per hint.

\paragraph{Client-side computation}
In the preprocessing phase, the client must make $m$ hints of $\sqrt{n}$ entries each, resulting in a total of $m\sqrt{n}$ XORs.

For each retrieval, the client performs a single XOR, leading to a constant client-side computation for each query.

\paragraph{Client-side storage}
The client stores a total of $m$ hints of size $\beta$, plus a seed for each hint. This results in a total client-side storage of $2m\beta$, or $O(m\beta)$. After each query (up to $\sqrt{n}$ queries), the client stores the entry which is retrieved, so the total storage required will be $O(m\beta + \sqrt{n})$.

\paragraph{Server computation}
Server computation is equal to the number of entries that must be retrieved per round. (The cost of XORs is negligible and so will not be included in this analysis.) Each hint is of size $\sqrt{n}$, so the number of entries retrieved to create the redacted hint will be $\sqrt{n} - 1$, resulting in a per-round server computation cost of $O(\sqrt{n})$. 

\subsection{\sysname}

\paragraph{Communication}

Like \basename, \sysname must initially download $n\beta$ entries in order to preprocess hints. However, in \sysname, each query must download $\sqrt{n}$ entries. This results in a \emph{per-query} communication of $O(\sqrt{n}\cdot\beta)$. The combined communication of one preprocessing phase and $\sqrt{n}$ retrievals is $2n$, and divided across $\sqrt{n}$ queries, we have an amortized communication of $2\sqrt{n}\cdot\beta$, or $O(\sqrt{n}\cdot\beta)$, identical to the per-query communication.

Additionally, the communication can be halved simply by using the entries from each query to preprocess in parallel, converting \sysname to a continuous scheme. This results in a total communication of $n\beta$ and an amortized communication of $\sqrt{n}\cdot\beta$, or $O(\sqrt{n}\cdot\beta)$.

\paragraph{Client-side computation}
In the preprocessing phase, the client must perform $\sqrt{n}$ XORs for each of $m$ hints, leading to a total client-side computation of $O(m\sqrt{n})$.

The client performs $\sqrt{n}$ XORs per query, leading to a per-query computation of $O(\sqrt{n})$.

\paragraph{Client-side storage}
Client-side storage in the default setting is identical to the cooperative-server setting.

\paragraph{Server computation}

The server computation in \sysname is nearly identical to \basename, with the exception that the server performs no XORs (as noted above, XORs are negligible). In the preprocessing phase, the server must retrieve and send $n$ entries to the client. The server will retrieve and send $\sqrt{n} - 1$ entries per query. If the scheme is continuous, the total server computation per $\sqrt{n}$ queries will be $n$, while if the scheme is not continuous, the total server computation per $\sqrt{n}$ queries is $2n$, making the amortized server computation $O(\sqrt{n})$ in both the limited-query and continuous settings.

\subsection{Comparison} 
\label{sec:comparison}

\basename and \sysname are grounded and have benefited from a long line of works in the PIR settings. In this section, we compare the complexities of \basename and \sysname with PIANO-23, PIANO-24, RMS-24, and WR-25. 

\textbf{Cooperative/Default server.} First, we note that none of the works preceding \sysname were designed to operate with a default (``non-cooperative'') server. Additionally, for normal complexity measures, we provide two dimensions of comparison: in the cooperative server setting, we position \basename along the above schemes directly. In the default server setting, we  position the full \sysname scheme against the modified/applied variants of PIANO-23, PIANO-24, RMS-24 and WR-25, discussed in Section~\ref{sec:discussion}.

\textbf{Pre-processing/per-query communication.}
In the cooperative server setting, we distinguish the actual query cost from the pre-processing cost. Here, \basename achieves the lowest known per-query communication cost of precisely $1 \cdot \beta$, improving by a factor of $2$ over the previously best known solution, RMS-24\footnote{PIANO-23 has communication cost $1 \cdot \beta$ but it does not support arbitrary query workloads. While this additional restriction may be appropriate in certain contexts, it does not generally hold and departs from the standard PIR model.}.
    
In the default server model, \sysname and all modified/applied methods exhibit $O(\sqrt{n} \cdot \beta)$ per-query communication cost. This allows all of them to stream entries and carry the pre-processing continuously at no added complexity.  
    
\textbf{Per-client storage.}
    \basename matches the client-side storage of PIANO-23, PIANO-24 and RMS-24 of $O(\sqrt{n} \cdot \log{n} \cdot \beta)$. 
    WR-25 has $O(\sqrt{n} \cdot \beta)$ client-side store (i.e., a $\log{n}$ is shaved off) but per-query communication cost $O(\sqrt{n} \cdot \beta)$.

    The hint structure and corresponding client-side storage of \sysname is the same as \basename, and likewise, remain the same in all the other schemes when modified/applied to the default server setting. 
    
\textbf{Work/Computation.}
    The server work is dominated with DB accesses, which is $O(n)$ for pre-processing and $O(\sqrt{n})$ per query in all methods. The client-side pre-processing computation is $O(n\log(n))$ in all schemes. Searching for a hint containing a queried index incurs $O(n)$ computation in \sysname, compared with $O(\sqrt{n})$ computation in all above mentioned systems due to sharding. While for most practical database sizes this computation complexity is not a concern (order of seconds), it is left as an open question to reduce the client-side computation.

\section{Evaluation}

\subsection{Experimental setup}

All experiments are run on a local server equipped with a 3.65 GHz AMD EPYC 9135 16-Core Processor and 30GB of RAM, running Ubuntu 24.04.04 LTS. Online hint search is parallelized across all available threads (32). The offline phase parallelizes hint construction across all available threads, with the database placed in shared memory to avoid per-worker copies.

The scheme is evaluated on databases with $2^{20}$, $2^{24}$, and $2^{28}$ entries with entry sizes varying from 64KB to 4GB. Parameters are set to $k = \lceil \sqrt{n} \rceil$ and $m = \lceil 2C \ln(n) \cdot n/k \rceil$ with coverage constant $C=4$, yielding a per-entry covering probability of at least $1 - 1/n^{1.25}$ by applying the Chernoff bound. For each configuration, $50$ queries are issued to unique, uniformly random targets, and the mean per-query online latency is reported. 

We simulated network latency based on real Internet transmission latency data taken from Cloudflare \cite{cloudflare-radar-bandwidth-by-continent-worldwide}.

\paragraph{Cooperative-server PIR.} 

In the cooperative paradigm, the server computes the XOR of the redacted entries on the client's behalf and returns a single $\beta$-byte response.
This is compared against RMS-24, the previously known paradigm whose per-query communication $O(1)$ is asymptotically optimal.

\paragraph{Default-server PIR.}
To validate our approach in the default-server setting, \basename was adapted accordingly by having the server return the $(k-1)$ redacted entries individually, which the client then XORs locally, yielding a per-query download of $(k-1)\beta$ bytes.  
To demonstrate its suitability for real-world web servers, the approach was applied to Wikidata. Rather than downloading and querying the entire database---which would be prohibitive in practice---the \sysname implementation queries on a subset of Wikidata. This implementation is available in our codebase.

\subsection{Implementation Details}
\sysname and \basename are implemented in a shared codebase. The two variants differ only in the server-side response: in \basename, the server returns the XOR of the redacted entries, whereas in \sysname, the server returns those entries individually and the client performs the XOR locally.

Benchmarking code was implemented alongside these schemes for \basename and RMS-24. The total benchmark implementation simulates network latency and server retrieval speeds using simple calculations. Seed expansion, hint search, and XOR times were emulated with randomized data.

\paragraph{Hint representation.}
Each hint is stored compactly as a seed together with its precomputed XOR value. In addition, the implementation stores a designated random replacement entry for each hint in order to support the local maintenance step described in Section~\ref{subsec:entryCaching}. Thus, each hint carries exactly the information needed to reconstruct its multiset, recover the queried entry, and update the remaining hint multiset after use.

\paragraph{Seed generation.}
Hint seeds are generated in counter mode from a 64-bit master seed, yielding a deterministic stream of distinct candidate seeds. Each candidate seed is
expanded via Algorithm~\ref{alg:expand-multiset-seed} to produce a size-$k$ multiset. Because distinct seeds may still expand to the same multiset, though it is a negligible probability, the implementation detects and discards duplicate expansions during preprocessing. This ensures that each stored hint is unique, so security is not compromised by querying the multiset more than once. Counter-mode generation also makes the process naturally resumable, which is useful for the continuous preprocessing variant discussed in ~\ref{subsec:continuous_preproc}

\paragraph{Parallelism.}
Hint construction is parallelized across worker processes. Each worker handles a disjoint portion of the seed space and computes the corresponding hint XOR values.
The online phase parallelizes hint search across worker processes, assigning a number of seeds to each worker and stopping all threads once a suitable candidate is found.

\paragraph{Network and server latency simulation.}
Network latency is calculated as
\[
\text{network\_ms} = \frac{(\text{upload\_bytes}+\text{download\_bytes}) \cdot 8}{\text{bits\_per\_ms}}.
\]
For \basename, the number of bytes downloaded is $1 \times \beta$, whereas for RMS-24 it is $2 \times \beta$.

We generously assume a server network speed is as high as raw I/O, and take service\_ms to be $\sqrt{n}\,\beta$ divided by IO throughput. To model wait times under concurrent load, we adopt an M/M/1 queuing model, assuming that each client issues queries at a rate of $1/\text{base\_ms}$, where base\_ms denotes the total round-trip time under no contention. The resulting server utilization is
\[
\rho = \frac{\text{num\_clients} \cdot \text{service\_ms}}{\text{base\_ms} + \text{service\_ms}}
\]
and the corresponding mean wait time is $w = (\text{service\_ms} \cdot \rho)/(1 - \rho)$.
The server is said to be \emph{saturated} when $\rho \ge 1$, indicating that requests arrive more quickly than the queue can be drained.

Server-side XOR latency is negligible, usually pipelined alongside memory/storage I/O, and so it is excluded.

\subsection{\basename Evaluation Results}

\begin{figure*}[t]
    \centering
    \includegraphics[alt={A plot illustrating total time for a single query vs size of beta when n = 2^20}, width=.33\linewidth]{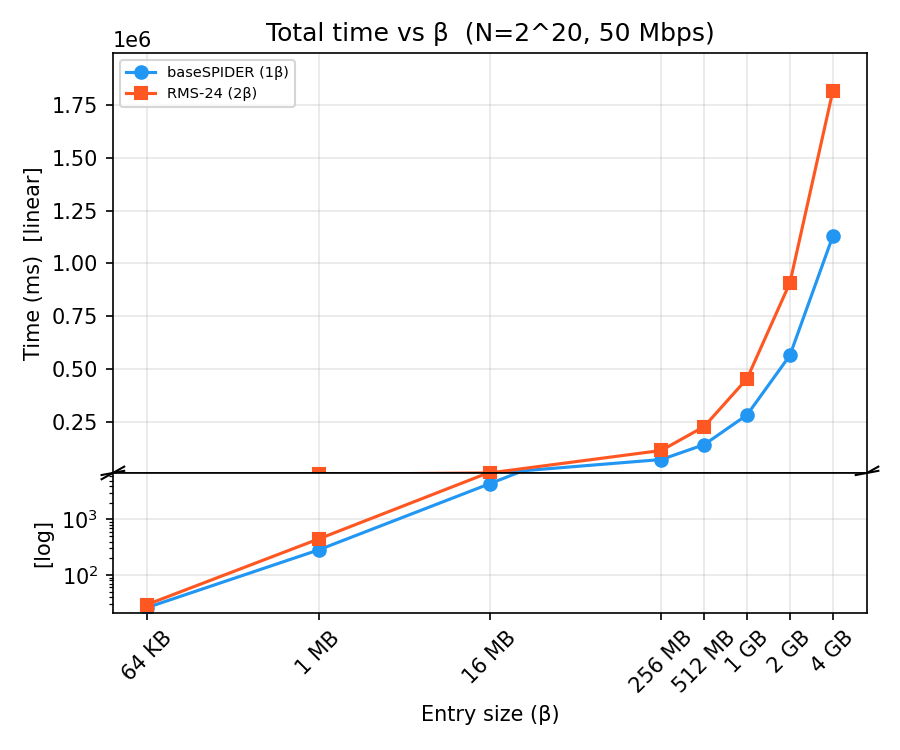}
    \includegraphics[alt={A plot illustrating total time for a single query vs size of beta when n = 2^24}, width=.33\linewidth]{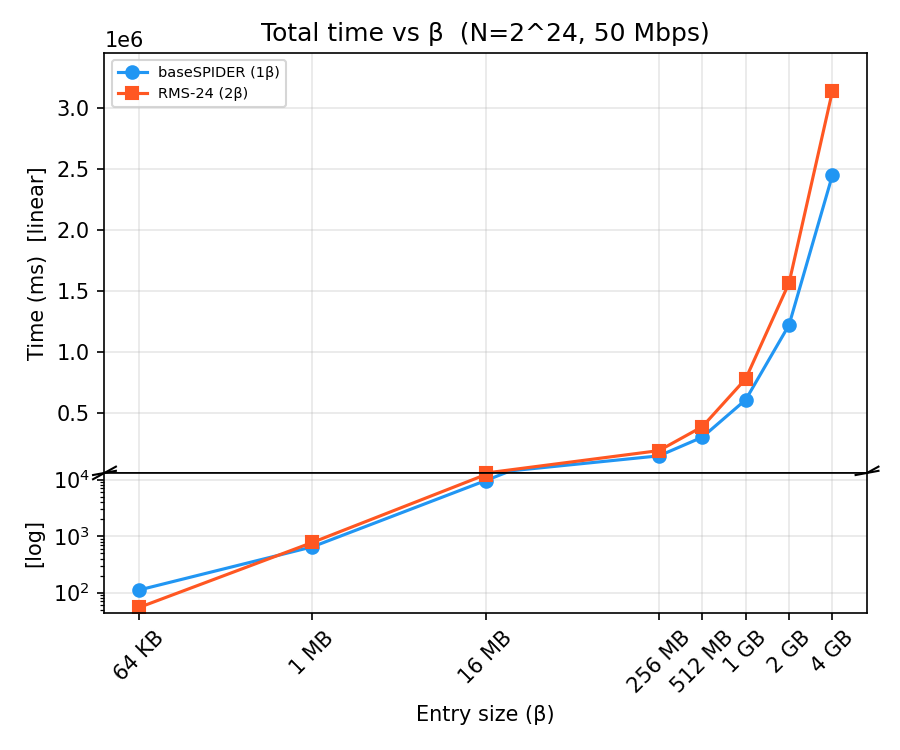}
    \includegraphics[alt={A plot illustrating total time for a single query vs size of beta when n = 2^28}, width=.33\linewidth]{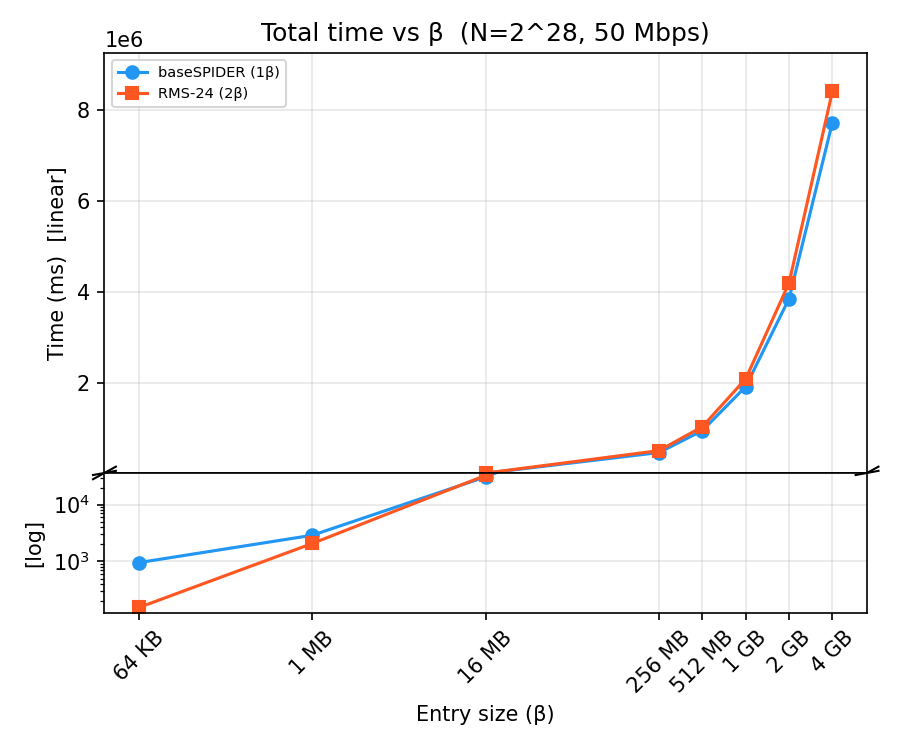}
    \includegraphics[alt={A plot illustrating total time for a single query vs size of beta when n = 2^20}, width=.33\linewidth]{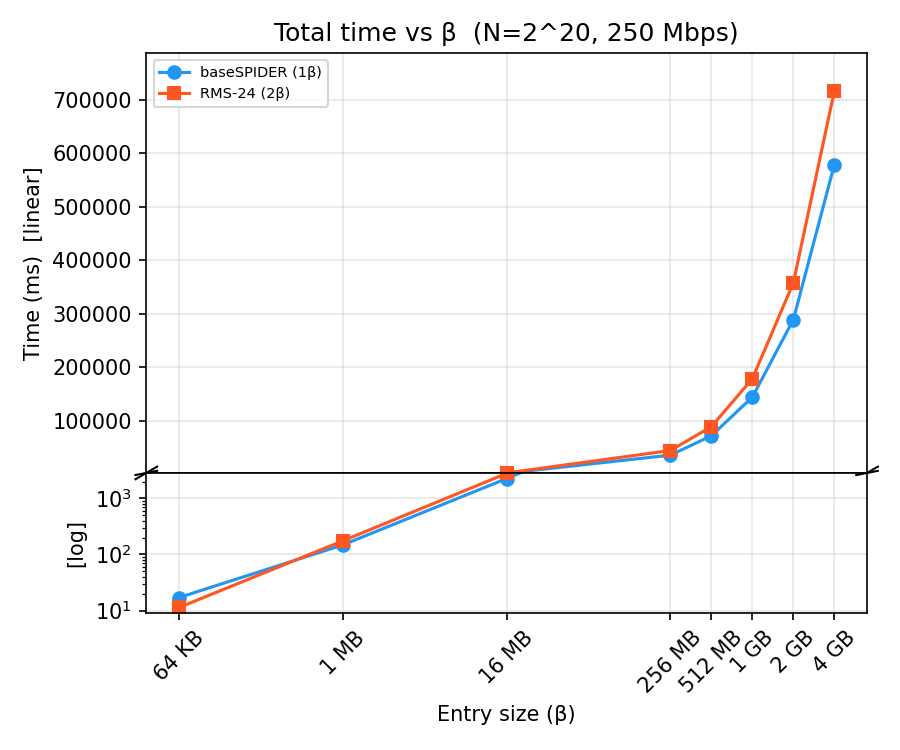}
    \includegraphics[alt={A plot illustrating total time for a single query vs size of beta when n = 2^24}, width=.33\linewidth]{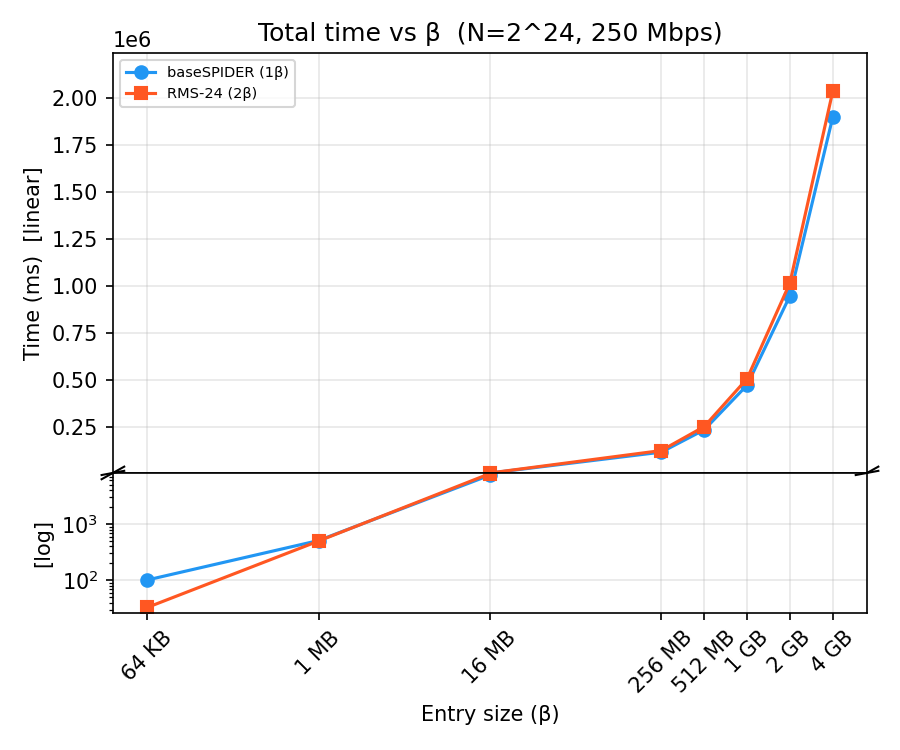}
    \includegraphics[alt={A plot illustrating total time for a single query vs size of beta when n = 2^28}, width=.33\linewidth]{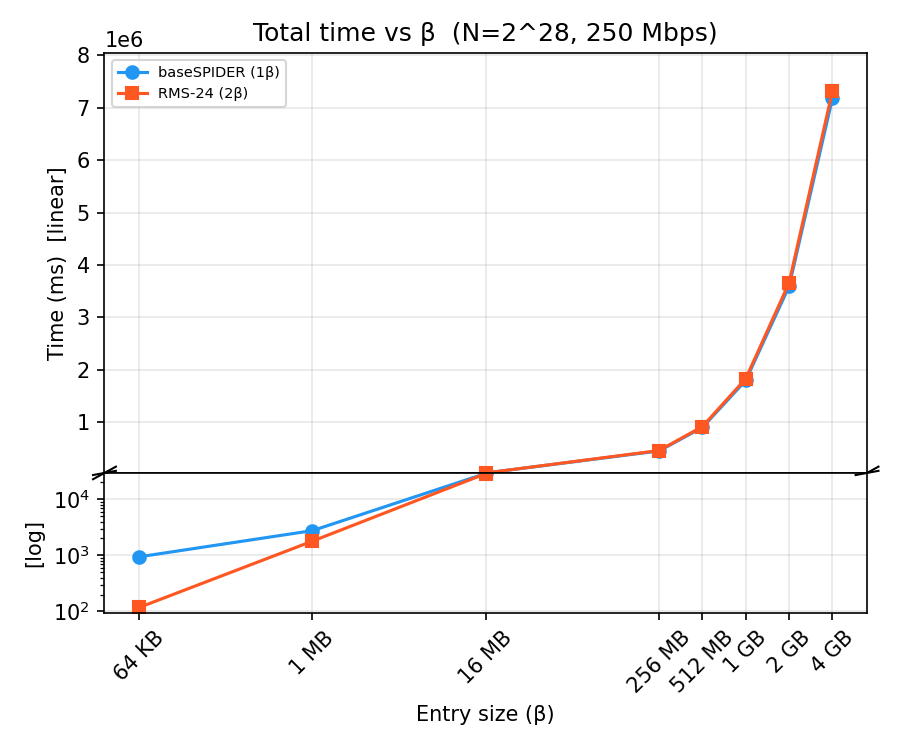}
    \caption{Plots illustrating total time in milliseconds versus various values of $\beta$. We can see as the size of $\beta$ increases, \basename gains an advantage over RMS-24 due to the two half-size hints RMS-24 returns.}
    \label{fig:time_vs_beta}
\end{figure*}

\begin{table}
  \centering
  \begin{tabular}{lccc}
    \toprule
    \textbf{SCHEME} & \textbf{$2^{20}$} & \textbf{$2^{24}$} & \textbf{$2^{28}$} \\
    \midrule
    RMS-24 & 0.087 & 0.268 & 1.045 \\
    \basename & 3.192 & 46.236 & 570.093 \\
    \bottomrule
  \end{tabular}
  \caption{Hint search latency in ms.}
  \label{tab:results}
\end{table}

In this subsection, we report the evaluation results for \basename. The evaluation focused on the time in milliseconds required to serve a single query, so our calculations do not address preprocessing or refreshing. We focus on three aspects in particular and their effects on a practical system: (1) Hint search and expansion, (2) network latency, and (3) server throughput.

\paragraph{Hint search and expansion.}
\basename requires a linear hint search in order to find a multiset which contains a desired item. This differs from RMS-24 and other schemes which use a sharded database, notably that \basename must expand all seeds to $k$ indices and search every item as opposed to generating and comparing just one item per seed (since sharding allows the client to compare only indices within that shard to the desired index). Table~\ref{tab:results} details the differences for the database sizes in our experiments.
While this is a disadvantage when retrieving small entries from large databases, as shown below, the time required for hint search is quickly dwarfed by network bandwidth and database retrieval limitations as entries become larger.

\paragraph{Network latency.}
When a single client is accessing the server, network latency quickly becomes the dominant feature. For this reason, the ability to transmit half the data of RMS-24 becomes an advantage, resulting in a measurable difference that becomes larger as $\beta$ grows. 
Figure~\ref{fig:time_vs_beta} shows single client end-to-end latency plotted against different values of $\beta$ and network speeds varied between two representative values, 50 Mbps and 250 Mbps~\cite{cloudflare-radar-bandwidth-by-continent-worldwide}. 
The figure shows that \basename has a distinct advantage over RMS-24 in both. 

\begin{figure*}
    \centering
    \includegraphics[alt={A plot illustrating total latency per query vs throughput of the server when n = 2^20 and network speed is 50Mbps}, width=.33\linewidth]{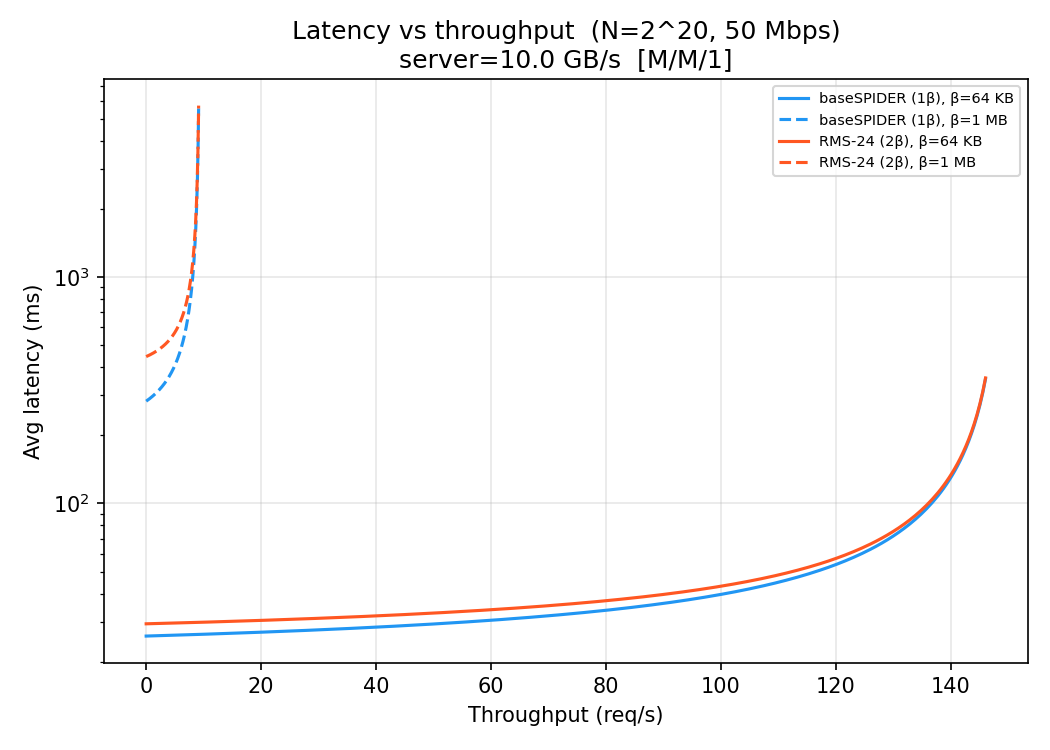}
    \includegraphics[alt={A plot illustrating total latency per query vs throughput of the server when n = 2^24 and network speed is 50Mbps}, width=.33\linewidth]{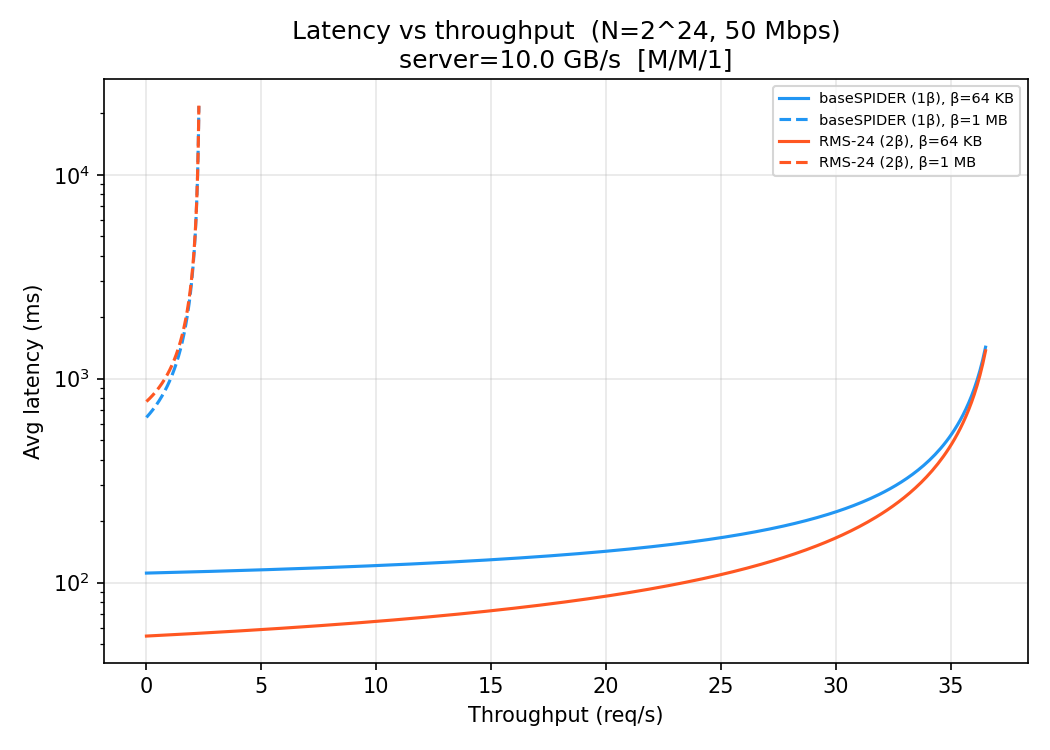}
    \includegraphics[alt={A plot illustrating total latency per query vs throughput of the server when n = 2^28 and network speed is 50Mbps}, width=.33\linewidth]{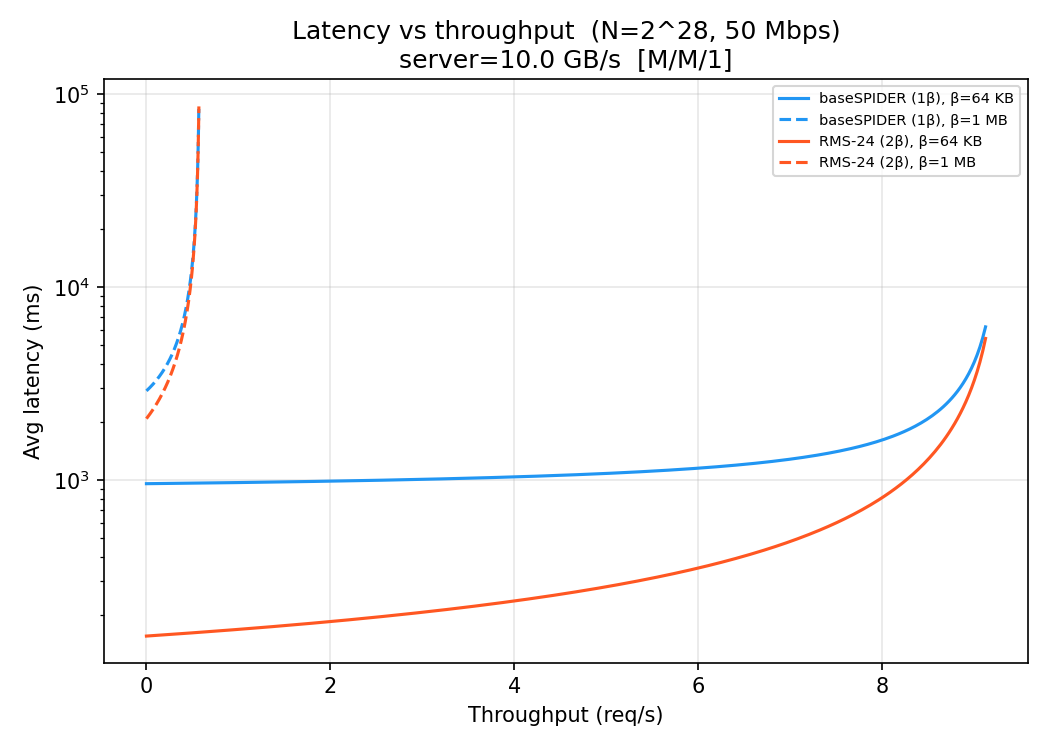}
    \includegraphics[alt={A plot illustrating total latency per query vs throughput of the server when n = 2^20 and network speed is 250 Mbps}, width=.33\linewidth]{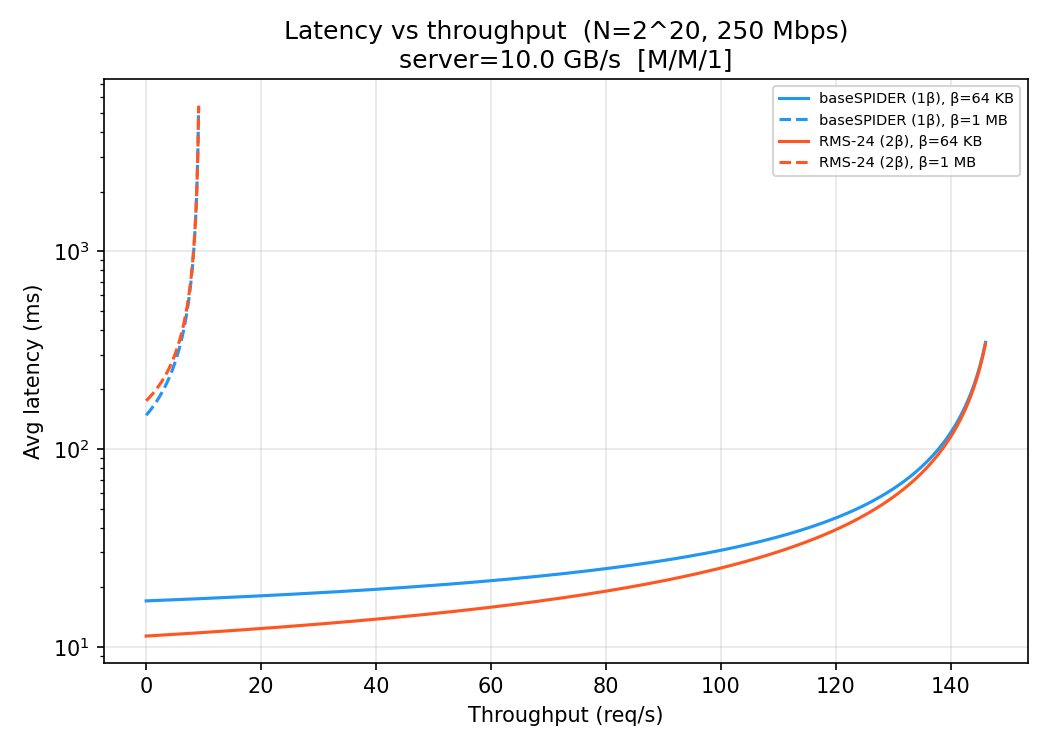}
    \includegraphics[alt={A plot illustrating total latency per query vs throughput of the server when n = 2^24 and network speed is 250 Mbps}, width=.33\linewidth]{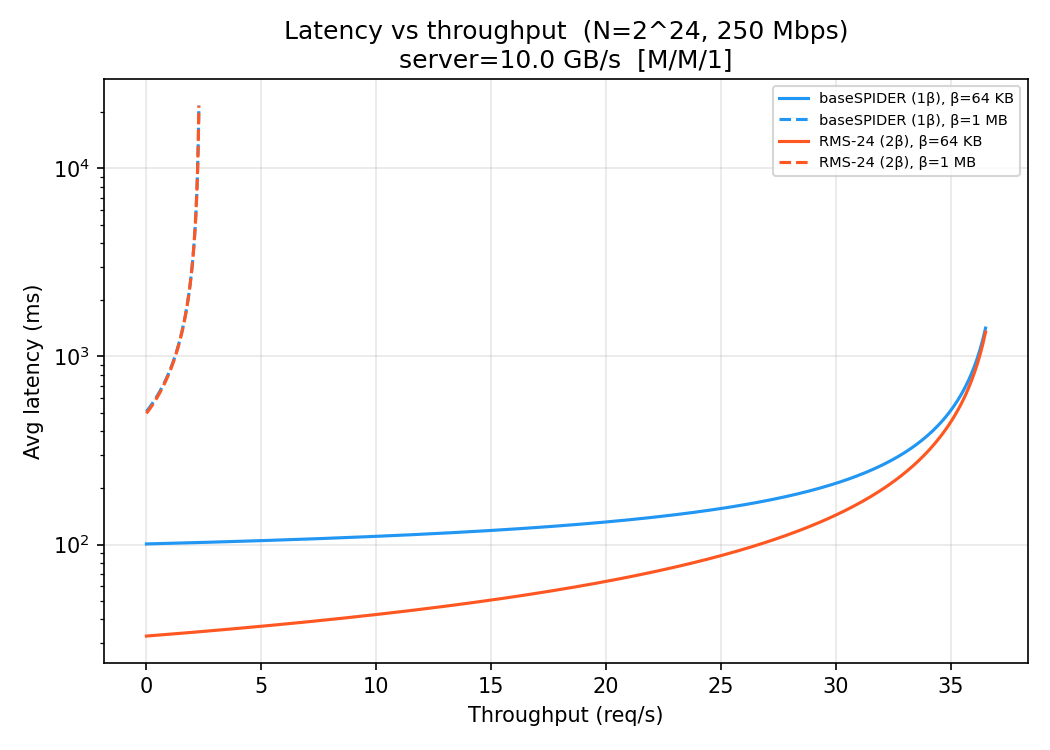}
    \includegraphics[alt={A plot illustrating total latency per query vs throughput of the server when n = 2^28 and network speed is 250 Mbps}, width=.33\linewidth]{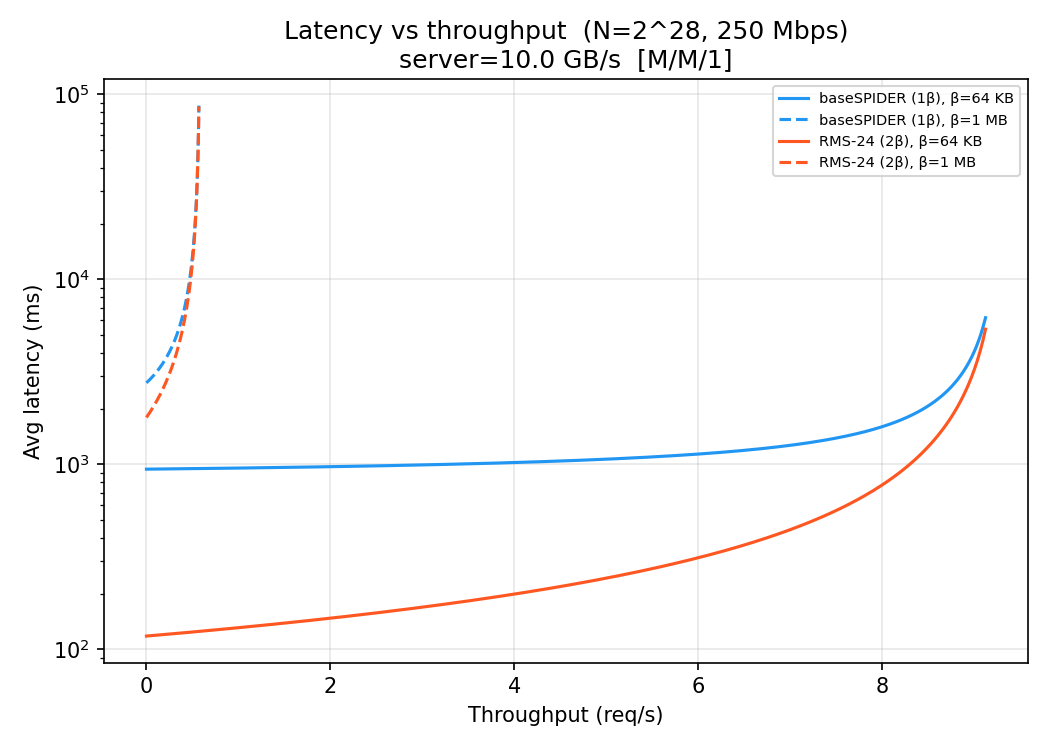}
    \caption{Plots illustrating latency in milliseconds versus throughput of the server. At a relatively low throughput, the latency for each request approaches infinity.}
    \label{fig:latency_vs_tput}
\end{figure*}

\paragraph{Server throughput.}
Even when database entries are on the smaller size, every individual server at the back-end of a web service typically need to cater to multiple concurrent clients. Figure~\ref{fig:latency_vs_tput} shows client latency as server throughput is driven up until saturation (at which point client wait times spike up and the server chokes). Despite the complexity incurred by hint search, \basename latency is comparable with RMS-24 at the same throughput rates. It exhibits slightly lower latency for similar throughput points compared with RMS-24 when fewer entries need to be searched. The advantage is retained when the ratio between entry size and network speed is higher, and vanishes when the network bandwidth increases. 

\subsection{WikiData Integration}

To demonstrate both the relevance of the default-server setting and the applicability of \sysname beyond synthetic row-indexed databases, \sysname has been integrated with the WikiData SPARQL endpoint.

\paragraph{Why the default-server setting fits.}
WikiData returns raw entries and does not provide any primitive for combining multiple rows on the server side. This matches the assumptions of \sysname: the server returns the requested entries individually, and the client performs
the XOR locally. In contrast, the cooperative paradigm used would require server-side
functionality that WikiData does not expose. 

\paragraph{Key space and representation.} Since WikiData is keyed by entity labels or identifiers rather than by integer row indices, using it with \sysname requires an offline enumeration of the relevant key space together with a deterministic mapping from keys to indices.
During the offline phase, the client issues a SPARQL query to enumerate a fixed
set of relevant WikiData entities and collect their associated values. This information is used only for local preprocessing: it allows the client to construct its hints and define the finite database over which the protocol
operates.
The online phase still queries the original WikiData service directly.

\paragraph{Key-to-index mapping.}
To allow the client to query by label rather than by row index, the offline snapshot is equipped with a minimal perfect hash over the enumerated key set \cite{DBLP:journals/corr/abs-1910-06416} \cite{https://doi.org/10.4230/lipics.sea.2025.21} \cite{DBLP:journals/corr/abs-2104-10402}. This yields a deterministic key-to-index map with no collisions on the chosen keys, allowing both preprocessing and online queries to resolve the same label to the same database index in constant time. Minimal perfect hashing is a standard tool for compact static key indexing.
The main practical requirement is consistency between the offline snapshot and the values used during online retrieval. Accordingly, for the purpose of index calculation, the integration assumes the key-set is fixed throughout preprocessing and query execution.
\section{Related Works}

Private Information Retrieval (PIR) has been extensively studied across single-server, multi-server, and preprocessing-based settings. Classical PIR constructions achieve sublinear communication either by relying on multiple non-colluding servers or heavy cryptographic tools, often incurring prohibitive server computation costs. 
Works on doubly-efficient PIR and FHE-based constructions explore regimes with strong asymptotic guarantees~\cite{linDoublyEfficientPrivate2023,luoFasterFHEBasedSingleServer2024}. Multi-server PIR and information-theoretic constructions provide improved asymptotics under non-collusion assumptions~\cite{singhInformationTheoreticMultiserverPrivate2025,ghoshalZeldaEfficientMultiserver2025}, while systems-oriented works extend PIR to broader applications such as private search and secure communication~\cite{zhouPacmannEfficientPrivate2024,ahmad2021addra}. However, these approaches rely on strong deployment assumptions, including multiple servers, specialized APIs, or cooperative infrastructure.

Recent advances focus on \emph{preprocessing PIR}, where a stateful client stores hints to reduce online complexity. Foundational results show that preprocessing enables overcoming classical lower bounds on server computation and communication, yielding schemes with sublinear amortized costs~\cite{cryptoeprint:2022/081_CGHK,cryptoeprint:2019/1075_CGK,lazzarettiNearOptimalPrivateInformation2022,zhouOptimalSingleServerPrivate2023}. Foundational works characterize achievable trade-offs in preprocessing PIR~\cite{cryptoeprint:2024/976}, and exploration of stateful PIR abstractions formalize the role of client memory in reducing online costs~\cite{10.1007/3-540-44598-6_4,10.1145/3243734.3243821}. Subsequent works achieve tight space-time bounds and improved asymptotics~\cite{wangSingleServerClientPreprocessing2025,hooverPlinkoSingleServerPIR2025}. Other works focus on simplifying constructions and improving concrete efficiency, including practical amortized schemes with low online communication overhead~\cite{RenLing2024SaPA,GhoshalAshrujit2024EPPW,cryptoeprint:2023/452_PIANO_11_12}.

baseSPIDER fits within the stateful PIR paradigm, building on the insight that client preprocessing can amortize work across queries. Compared to prior preprocessing schemes~\cite{cryptoeprint:2022/081_CGHK,zhouOptimalSingleServerPrivate2023,wangSingleServerClientPreprocessing2025,hooverPlinkoSingleServerPIR2025}, it matches optimal asymptotic communication while improving constant factors and offering a conceptually simpler construction. These improvements are particularly impactful in regimes with large database entries, where constants dominate practical performance.

Early steps on \emph{non-cooperative} or \emph{default-server} PIR settings, where the server exposes only a standard indexed interface, include systems such as Zeal~\cite{zipkin:zeal}, which aim to support PIR over unmodified databases but rely on trusted proxies. 
To our knowledge, \sysname is the first construction achieving single-server PIR in this fully non-cooperative setting. While related efforts such as Zeal~\cite{zipkin:zeal} explore similar goals, SPIDER is obtained via a simple transformation from baseSPIDER, removing server cooperation entirely, enabling practical deployment over existing unmodified systems.

\end{doublespacing}

\renewcommand\refname{References}
\bibliographystyle{plainnat} 
\textnormal{\bibliography{refs.bib}}
\newpage

\end{document}